\documentclass[preprint,12pt]{elsarticle}



\usepackage{subfigure} 
\usepackage{psfig}
\usepackage{amsmath}
\usepackage{graphicx}
\usepackage{amssymb}
\usepackage{multirow}
\usepackage{xcolor}

\usepackage{amssymb}
\usepackage{amsthm}

\newtheorem{prop}{Proposition}[section]

\newtheorem{remark}{Remark}[section]

\usepackage{lineno}





\begin{document}

\begin{frontmatter}



\title{FORM-based global reliability sensitivity analysis of systems with multiple failure modes}

 \author{Iason Papaioannou}
 \author{Daniel Straub}
 \address{Engineering Risk Analysis Group, Technische Universit{\"a}t M{\"u}nchen, Arcisstr. 21, 80290 M{\"u}nchen, Germany}
 


\begin{abstract}
Global variance-based reliability sensitivity indices arise from a variance decomposition of the indicator function describing the failure event. The first-order indices reflect the main effect of each variable on the variance of the failure event and can be used for variable prioritization; the total-effect indices represent the total effect of each variable, including its interaction with other variables, and can be used for variable fixing. This contribution derives expressions for the variance-based reliability indices of systems with multiple failure modes that are based on the first-order reliability method (FORM). The derived expressions are a function of the FORM results and, hence, do not require additional expensive model evaluations. 
They do involve the evaluation of multinormal integrals, for which effective solutions are available. 
We demonstrate that the derived expressions enable an accurate estimation of variance-based reliability sensitivities for general system problems to which FORM is applicable.

\end{abstract}

\begin{keyword}
Reliability analysis \sep Sensitivity analysis \sep system reliability \sep FORM


\end{keyword}

\end{frontmatter}


\section{Introduction}
\label{S:1}

Structural reliability analysis evaluates the probability of failure of a structural component or system accounting for uncertainties in the structural parameters and external loading \cite{der2022structural}.
Failure of structural systems can be described by a combination of multiple failure modes.
The estimation of the system failure probability is a challenging task as it involves solving a high-dimensional probability integral over a failure domain that is often disjoint and has non-smooth boundaries. 
The failure domain is only known point-wise through the evaluation of one or several engineering models, which can be computationally intensive.
Application of standard Monte Carlo is inefficient when the target probability is small, as is typically the case for failure probabilities.
Rare event simulation methods employ tailored sampling techniques to efficiently sample the failure domain and estimate the sought probability; they include subset simulation \cite{Au01}, sequential importance sampling \cite{Papaioannou16,cheng2023rare}, and the cross-entropy method \cite{Kurtz13,papaioannou2019improved}.
Conversely, approximation methods seek a computable approximation of the failure probability through performing a Taylor approximation of the boundary of the failure domain at one or more design points, and include the first- and second-order reliability methods (FORM/SORM) \cite{hohenbichler1982first,hohenbichler1987new,ADK05}.

In the design and assessment of engineering systems, it is often essential to understand the influence of the uncertain parameters on the probability of failure, a task known as reliability sensitivity analysis.
Local reliability sensitivity analysis involves evaluation of the partial derivatives of the probability of failure with respect to uncertain or deterministic input parameters \cite[e.g.,][]{Wu94,Song09,Papaioannou13,dubourg2014meta,jensen2015reliability,papaioannou2018reliability}, Global reliability sensitivity analysis examines the average effects of the uncertain inputs on the probability of failure \cite[e.g.,][]{au2004probabilistic,cui2010moment,luyi2012moment,wei2012efficient,wang2013application,borgonovo2016moment,ehre2020framework,idrissi2021developments,straub2022decision}. 
Particularly, variance-based reliability sensitivity analysis, introduced in \cite{wei2012efficient}, employs the variance decomposition of the indicator function of the failure event to determine the proportion of the variance that can be attributed to each input and subsets thereof.
The resulting sensitivity measures are extensions of classical variance-based sensitivity analysis \cite{sobol1993sensitivity} and include the first-order and total-effect reliability sensitivity indices.
The first-order indices indicate the contribution of the variance of individual inputs to the variance of the indicator function and can be used for variable prioritization, whereas total-effect indices represent the total contribution of all variance terms that include a certain input and can be used for variable fixing. 
Several methods have been proposed to estimate the first-order and total effect indices, including single-loop sampling methods \cite{wei2012efficient}, the state dependent parameter method \cite{luyi2012moment} and an approach that post-processes failure samples from sampling-based reliability methods \cite{perrin2019efficient,li2019global}.
Extensions of these indices for problems with dependent inputs were recently proposed in \cite{ehre2024variance}.
A more traditional sensitivity measure that is widely used in structural reliability is the ${\alpha}$-factors, obtained as a by-product of a FORM analysis  \cite{ADK05,hohenbichler1986sensitivity}.
The ${\alpha}$-factors are the directional cosines of the design point, i.e., the point in the failure domain with maximum likelihood.
Generalizations of the ${\alpha}$-factors have been proposed for dependent inputs in \cite{ADK05} and for multimodal failure domains in \cite{kim2018generalized}.

In \cite{papaioannou2021variance}, the authors derived approximations of the variance-based sensitivity indices that depend on results of a FORM analysis and studied the relation of these approximations to the $\alpha$-factors. 
These approximations work well for structural reliability problems with a single dominant failure mode. 
In this contribution, we derive approximations of the reliability sensitivity indices that are based on the FORM approximation to system problems with multiple failure modes. 
We first study the FORM approximations to parallel and series systems and derive expressions for the reliability sensitivities of these approximations.
Thereafter, we apply the inclusion-exclusion principle to deal with general systems.
We demonstrate with numerical examples that the derived formulas provide excellent approximations of the variance-based reliability sensitivity indices for general system problems.

The structure of the paper is as follows. In Section \ref{s:problem}, we review system reliability analysis and the variance-based reliability sensitivity indices. Section \ref{s:form} discusses FORM for system reliability analysis. Section \ref{s:sensitivities} presents approximations of the variance-based sensitivities for series and parallel problems as well as for general systems. Section \ref{s:examples} presents numerical examples that test the accuracy of the FORM approximations. The paper closes with the conclusions in Section \ref{s:conclusion}.

\section{Problem formulation} \label{s:problem}

\subsection{System reliability}

Consider an engineering system with $m$ different component failure modes.
Let $\boldsymbol{X}$ denote a continuous random vector of dimension $n$ modeling the uncertain system variables, which is described by a joint probability density function (PDF) $f(\boldsymbol{x})$.
Each failure mode is mathematically represented by a limit-state function (LSF) $g_i(\boldsymbol{x})$ such that the corresponding component failure event is $F_i=\{g_i(\boldsymbol{X}) \leq 0\}$ with $i=1,\ldots,m$.
Failure of the system may occur due to different combinations of occurrence of the individual failure events $F_i$. 
The series and parallel system problems are defined, respectively, by the following events:
\begin{equation} \label{eq:Fser}
F_\mathrm{ser} = \bigcup_{i=1}^m F_i 
\end{equation}
and 
\begin{equation} \label{eq:Fpar}
F_\mathrm{par} = \bigcap_{i=1}^m F_i  \, .
\end{equation}
The general system problem can be defined as a series of parallel system problems, as follows
\begin{equation} \label{eq:Fgensys}
F_\mathrm{sys} = \bigcup_{k=1}^K \bigcap_{i \in \boldsymbol{c}_k} F_i \, ,
\end{equation}
where $\boldsymbol{c}_k \in \mathcal{P}(\{1,\ldots,m\}) \backslash \{ \emptyset \}$ are the cut-sets of the system and $\mathcal{P}(S)$ denotes the power set of $S$.
From Eq.~\eqref{eq:Fgensys} one can retrieve the component problem by setting $K = 1$ and $\boldsymbol{c}_1= \{i\}$, the parallel system problem by setting $K = 1$ and $\boldsymbol{c}_1= \{1,\ldots,m\}$, and the series system problem by setting $K = m$ and $\boldsymbol{c}_k= \{k\}$.

The probability of failure of the system can be expressed as
\begin{equation} \label{eq:pf}
p_{F,\mathrm{sys}}=\Pr(F_\mathrm{sys}) = \int_{\Omega_{F,\mathrm{sys}}}{ f (\boldsymbol{x})\mathrm{d}\boldsymbol{x}} = \int_{\mathbb{R}^n}{ \mathrm{I}_{F,\mathrm{sys}}(\boldsymbol{x}) f (\boldsymbol{x})\mathrm{d}\boldsymbol{x}} \, ,
\end{equation}
where $\Omega_{F,\mathrm{sys}}=\{\boldsymbol{x} \in \mathbb{R}^n : \bigcup_{k=1}^K \bigcap_{i \in \boldsymbol{c}_k} ( g_i(\boldsymbol{x}) \leq 0 ) \}$ is the system failure domain and $\mathrm{I}_{F,\mathrm{sys}}(\boldsymbol{x})$ is the indicator function of the system failure event, which takes the value 1 if $\boldsymbol{x} \in \Omega_{F,\mathrm{sys}}$ and 0 otherwise.
In general system problems the failure domain often has complex shape and is only known point-wise in terms of one or more computationally demanding engineering models.
Moreover, the target probability $p_{F,\mathrm{sys}}$ of safety-critical engineering systems is often small.
Hence, accurate estimation of $p_{F,\mathrm{sys}}$ is a nontrivial task as it requires the efficient estimation of rare events in complex failure domains.

\subsection{Variance-based reliability sensitivity analysis} \label{sec:VBrelSens}


Variance-based sensitivity analysis aims at identifying the input random variables in $\boldsymbol{X}$ that have largest impact on the variance of a quantity of interest (QOI) $Q = h(\boldsymbol{X})$, where $h:\mathbb{R}^n \to \mathbb{R}$ defines an input-output relationship.
It is based on the functional analysis of variance (ANOVA) decomposition of $h(\boldsymbol{x})$.
Consider the case where the random vector $\boldsymbol{X}$ consists of statistically independent components, i.e., $f(\boldsymbol{x})=\prod_{i=1}^n f_i(x_i)$ with $f_i(x_i)$ denoting the marginal PDF of $X_i$.
Also assume that $h(\boldsymbol{x})$ is square-integrable, i.e., $ \mathrm{E} [h(\boldsymbol{X})^2] < \infty$.
The functional ANOVA decomposition of $h(\boldsymbol{x})$ reads \cite{efron1981jackknife,sobol1993sensitivity}:
\begin{equation} \label{eq:FANOVA}
h(\boldsymbol{x}) = h_{\emptyset} + \sum_{i=1}^n h_i (x_i) +\sum_{1 \leq i < j \leq n} h_{i,j} (x_i,x_j) + \cdots + h_{1,\ldots, n} (x_1,\ldots,x_n) \, .
\end{equation}
The representation of Eq.~\eqref{eq:FANOVA} exists and is unique provided that 
\begin{equation} \label{eq:FANOVAcond}
\mathrm{E}[ h_{\boldsymbol{v}}(\boldsymbol{X}_{\boldsymbol{v}} ) \vert \boldsymbol{X}_{\boldsymbol{v} \backslash i} ]= \int_{-\infty}^{\infty} h_{\boldsymbol{v}}(\boldsymbol{x}_{\boldsymbol{v}} )  f_i(x_i) \mathrm{d} x_i = 0, \; \forall i \in \boldsymbol{v}, \forall \boldsymbol{v} \in \mathcal{P}(\{1,\ldots,n\}) \, ,
\end{equation}
with $\boldsymbol{x}_{\boldsymbol{v}} = \{x_i, i \in \boldsymbol{v}\}$. From Eq.~\eqref{eq:FANOVAcond} it follows that $h_{\emptyset} = \mathrm{E}[ h(\boldsymbol{X} ) ]$ and that the summands in Eq.~\eqref{eq:FANOVA} are mutually orthogonal, i.e., it is $\mathrm{E}[ h_{\boldsymbol{v}}(\boldsymbol{X}_{\boldsymbol{v}} ) h_{\boldsymbol{w}}(\boldsymbol{X}_{\boldsymbol{w}} ) ] =0 $ for $\boldsymbol{v} \neq \boldsymbol{w} \in \mathcal{P}(\{1,\ldots,n\})$.
From the orthogonality property, one gets the following decomposition of the variance of $Q$ in terms of the variances of the ANOVA summands:
\begin{equation} \label{eq:ANOVA}
\mathrm{Var}(Q) = \sum_{i=1}^n V_i +\sum_{1 \leq i < j \leq n} V_{i,j} + \cdots + V_{1,\ldots, n} \, ,
\end{equation}
where $V_{\boldsymbol{v}}=\mathrm{Var}(h_{\boldsymbol{v}}(\boldsymbol{X}_{\boldsymbol{v}} ) )$. 
The component variances can also be expressed recursively in terms of variances of conditional expectations as
\begin{equation} \label{eq:ViCondExp}
V_{\boldsymbol{v}} = \mathrm{Var}(\mathrm{E}[Q \vert \boldsymbol{X}_{\boldsymbol{v}} ]) - \sum_{\boldsymbol{w} \in \mathcal{P}(\boldsymbol{v}) \backslash \{\emptyset,\boldsymbol{v} \} } V_{\boldsymbol{w}} \, .
\end{equation}
Assume now that $\mathrm{Var}(Q) \neq 0$. The Sobol' index $S_{\boldsymbol{v}}$ \cite{sobol1993sensitivity} and total-effect index $S_{\boldsymbol{v}}^T$ \cite{homma1996importance} associated with variable subset $X_{\boldsymbol{v}}$ are defined as
\begin{equation} \label{eq:Sobol}
S_{\boldsymbol{v}} = \frac{V_{\boldsymbol{v}}}{\mathrm{Var}(Q)} \, , \qquad S_{\boldsymbol{v}}^T = 1- \frac{\mathrm{Var}(\mathrm{E}[Q \vert \boldsymbol{X}_{\sim \boldsymbol{v}} ]) }{\mathrm{Var}(Q)} \, ,
\end{equation}
where $\sim \boldsymbol{v} = \{1,\ldots,n \} \backslash \boldsymbol{v}$.
The Sobol' index $S_{\boldsymbol{v}}$ measures the portion of $\mathrm{Var}(Q)$ due to the interactions between variables $\boldsymbol{X}_{\boldsymbol{v}}$, whereas the total-effect index $S_{\boldsymbol{v}}^T$ measures the contribution due to variables $\boldsymbol{X}_{\boldsymbol{v}}$ and their interactions with all other variables in $\boldsymbol{X}$.
For $\boldsymbol{v}=\{ i \}$, the first-order Sobol' index $S_i$ measures the contribution of the main effect of $X_i$ on the variance of $Q$ and the index $S_{i}^T$ reflects the variance portion due to $X_i$ including its higher-order interactions, i.e., it is $S_{i}^T = \sum_{\boldsymbol{v} \in \mathcal{P}(1,\ldots,n), i \in  \boldsymbol{v} } S_{\boldsymbol{v}}$. 

\begin{remark} \label{rem:closed}
Alternative to the Sobol' index $S_{\boldsymbol{v}}$ of Eq.\eqref{eq:Sobol}, one can evaluate the closed Sobol' index $S_{\boldsymbol{v}}^{clo}$ of variable subset $X_{\boldsymbol{v}}$.
$S_{\boldsymbol{v}}^{clo}$ collects the total variance contribution of all variables in $X_{\boldsymbol{v}}$ and can be defined as $S_{\boldsymbol{v}}^{clo}=\mathrm{Var}(\mathrm{E}[Q \vert \boldsymbol{X}_{\boldsymbol{v}} ])$.
We note that the closed Sobol' index of a single variable $X_i$ is identical to the corresponding Sobol' index.
Closed Sobol' indices are particularly relevant if a set of variables represent the effect of a single physical quantity, e.g., the variables entering the discrete representation of a random field representing the spatial variability of a material property \cite[e.g.,][]{ehre2020global}.
\end{remark}

In \cite{wei2012efficient}, variance-based sensitivity indices are proposed that are based on the variance decomposition of the random variable $Z = \mathrm{I}_{F}(\boldsymbol{X})$, describing the geometry of the failure domain in the outcome space of $\boldsymbol{X}$.
To simplify notation, we omit the subscript that denotes the type of failure event; the reliability sensitivity indices discussed herein apply to any component or system failure event.
The variable $Z$ follows the Bernoulli distribution with parameter $p_F=\Pr(F)$; it has mean $\mathrm{E}[Z]=p_F$ and variance $\mathrm{Var}(Z)=p_F(1-p_F)$.
A decomposition of the variance of $Z$ leads to the following Sobol' reliability sensitivity index:
\begin{equation} \label{eq:SobolRel}
S_{F,\boldsymbol{v}}=\frac{V_{F,\boldsymbol{v}}}{\mathrm{Var}(Z)}=\frac{V_{F,\boldsymbol{v}}}{p_F(1-p_F)} \, ,
\end{equation}
with
\begin{equation} \label{eq:ViCondExpF}
V_{F,\boldsymbol{v}} = \mathrm{Var}(\mathrm{E}[Z \vert \boldsymbol{X}_{\boldsymbol{v}} ]) - \sum_{\boldsymbol{w} \in \mathcal{P}(\boldsymbol{v}) \backslash \{\emptyset,\boldsymbol{v} \} } V_{F,\boldsymbol{w}} \, .
\end{equation}
The first term in the right hand side of Eq.~\eqref{eq:ViCondExpF} can also be written as the variance of the conditional probability of $F$ given $\boldsymbol{X}_{\boldsymbol{v}}$, i.e.,  $\mathrm{Var}(\mathrm{E}[Z \vert \boldsymbol{X}_{\boldsymbol{v}} ]) =\mathrm{Var}(\Pr(F \vert \boldsymbol{X}_{\boldsymbol{v}} ))$ \cite{luyi2012moment}.
The total-effect reliability sensitivity index is given by
\begin{equation} \label{eq:totalEffectRel}
S_{F,\boldsymbol{v}}^T = 1- \frac{\mathrm{Var}(\mathrm{E}[Z \vert \boldsymbol{X}_{\sim \boldsymbol{v}} ]) }{\mathrm{Var}(Z)} = 1- \frac{\mathrm{Var}(\Pr(F \vert \boldsymbol{X}_{\sim \boldsymbol{v}} )) }{p_F(1-p_F)} \, .
\end{equation}
The first-order reliability component index $S_{F,i}$ can be used for factor prioritization, i.e., to identify which random variable $X_i$ if learned (e.g., through investing in measurement campaigns) will have the largest impact on the value of $p_F$.
The total-effect reliability component index $S_{F,i}^T$ can be used for variable fixing, i.e., to identify the random variables with $S_{F,i}^T \approx 0$, which if fixed will not impact the prediction of $p_F$.

The Sobol' and total-effect reliability sensitivity indices of Eqs. \eqref{eq:SobolRel} and \eqref{eq:totalEffectRel} can be estimated by several sampling-based approaches \cite[e.g.,][]{luyi2012moment,wei2012efficient,perrin2019efficient,li2019global}.
As an alternative, in \cite{papaioannou2021variance} we derive expressions for the sensitivity indices of the FORM probability approximation of component problems with a unique design point. 
In the following, we derive approximations of these indices based on the FORM approach to system reliability analysis. 

\begin{remark} \label{rem:dependent}
Recently, the first-order and total-effect reliability sensitivity indices were extended to treat reliability problems with dependent inputs \cite{ehre2024variance}.
This is achieved by transforming the input space to an equivalent independent space using the Rosenblatt transform and evaluating the indices of the independent variables. 
Through performing $n$ cyclic shifts of the input vector $\boldsymbol{X}$ and corresponding transformations it is possible to isolate the first-order and total-effect contribution of each variable excluding its dependence with other variables.
Although in this paper we focus on problems with independent inputs, the concepts discussed herein can be directly applied to problems with dependent inputs through application of the approach in \cite{ehre2024variance}.
\end{remark}

\section{FORM for system problems} \label{s:form}

FORM is an approximation method for solving structural reliability problems that is based on a linearization of the LSF(s) describing the failure event.
Although originally developed for component reliability problems, it has been extended for application to reliability problems with multiple modes of failure in \cite{hohenbichler1982first}.
Therein, a FORM approximation is developed for the probabilities of the elementary series and parallel system failure events of Eqs.~\eqref{eq:Fser} and \eqref{eq:Fpar}.

The first step of FORM is to transform the problem to an equivalent random variable space $\boldsymbol{U}$, consisting of independent standard normal random variables.
The vector $\boldsymbol{U}$ can be expressed in terms of the original random vector $\boldsymbol{X}$ through an isoprobabilistic mapping $\boldsymbol{U} = \boldsymbol{T} (\boldsymbol{X})$ \cite{Hohe81,ADK86}.
For the case where the variables $\boldsymbol{X}$ are statistically independent and have strictly increasing marginal cumulative distribution functions (CDFs) $F_i(x_i),i=1,\ldots,n$, this mapping is $ \boldsymbol{T}(\boldsymbol{x}) = \Phi^{-1}[F_1(x_1);\cdots;F_n(x_n)]$, with $\Phi$ denoting the standard normal CDF.
Each LSF $g_i(\boldsymbol{x})$, describing the component failure event $F_i$, can be expressed in the $\boldsymbol{U}$-space as $G_i(\boldsymbol{u}) = g_i[\boldsymbol{T}^{-1}(\boldsymbol{u})]$ with $\boldsymbol{T}^{-1}$ denoting the inverse mapping $\boldsymbol{X} = \boldsymbol{T}^{-1} (\boldsymbol{U})$.
Each component failure event is expressed in the $\boldsymbol{U}$-space as $F_i = \{ G_i(\boldsymbol{U}) \leq 0 \}$ and the probability integral of Eq.~\eqref{eq:pf}, describing the general system problem, can be transformed as:
\begin{equation} \label{eq:pfsns}
p_{F,\mathrm{sys}}=\Pr(F_\mathrm{sys}) = \int_{\Omega_{F,\mathrm{sys}}}{ \varphi_n (\boldsymbol{u}) \mathrm{d}\boldsymbol{u}}= \int_{\mathbb{R}^n} \mathrm{I}_{F,\mathrm{sys}}(\boldsymbol{u}) {\varphi_n (\boldsymbol{u})\mathrm{d}\boldsymbol{u}} \, ,
\end{equation}
where $\varphi_n$ is the $n$-variate independent standard normal PDF and $\mathrm{I}_{F,\mathrm{sys}}(\boldsymbol{u})$ is a function that indicates membership of the outcome $\boldsymbol{u}$ in the system failure domain $\Omega_{F,\mathrm{sys}}=\{\boldsymbol{u} \in \mathbb{R}^n : \bigcup_{k=1}^K \bigcap_{i \in \boldsymbol{c}_k} ( G_i(\boldsymbol{u}) \leq 0 ) \}$.  

Next, the individual LSFs are linearized at an appropriately chosen linearization point $\boldsymbol{u}^*_i$.
For series system problems, the point $\boldsymbol{u}^*_i$ can be chosen as the most likely failure point (aka design point) of the $i$-th component reliability problem, which can be found through solving
\begin{equation} \label{eq:FORMopt}
\boldsymbol{u}^*_i = {\text{argmin}} \{\| \boldsymbol{u} \| \mid G_i(\boldsymbol{u}) = 0 \} \, .
\end{equation}
For parallel systems a better choice is the so-called joint design point, found through solving
\begin{equation} \label{eq:FORMoptPar}
\boldsymbol{u}^* = {\text{argmin}} \{\| \boldsymbol{u} \| \mid G_i(\boldsymbol{u}) \leq 0 , i=1,\ldots,m \} \, .
\end{equation} 
Assuming that each $G_i(\boldsymbol{u})$ is continuous and differentiable in the neighbourhood of the linearization point $\boldsymbol{u}^*_i$ (in case of parallel systems, the joint design point $\boldsymbol{u}^*$ should be used instead), one can approximate $G_i(\boldsymbol{u})$ in this neighbourhood through its first-order Taylor expansion around $\boldsymbol{u}^*_i$, which reads
\begin{equation} \label{eq:G1}
G_i(\boldsymbol{u}) \cong G_{1,i}(\boldsymbol{u}) = \nabla G_i(\boldsymbol{u}^*_i) (\boldsymbol{u}-\boldsymbol{u}^*_i) = \| \nabla G(\boldsymbol{u}^*_i) \| (\beta_i - \boldsymbol{\alpha}_i \boldsymbol{u} )\, .
\end{equation}
Here $\nabla G_i(\boldsymbol{u}^*_i) = [\partial G_i / \partial u_1 \rvert_{\boldsymbol{u} = \boldsymbol{u}^*_i}, \ldots, \partial G_i / \partial u_n \rvert_{\boldsymbol{u} = \boldsymbol{u}^*_i}]$ is the gradient row vector, $\boldsymbol{\alpha}_i = -\nabla G_i(\boldsymbol{u}^*_i)  /  \| \nabla G_i(\boldsymbol{u}^*_i) \|$ is the normalized negative gradient vector at $\boldsymbol{u}^*_i$ and $\beta_i = \boldsymbol{\alpha}_i \boldsymbol{u}^*_i$ is the FORM component reliability index.
The FORM approximation of the failure event $F_i$ is
\begin{equation} \label{eq:F1}
F_i \cong F_{1,i} = \{ G_{1,i}(\boldsymbol{U}) \leq 0\} = \{\boldsymbol{\alpha}_i \boldsymbol{U} \geq \beta_i \} \, ,
\end{equation}
and the corresponding approximation of the probability of the component failure event reads:
\begin{equation}
p_{F,i} \cong p_{F_{1,i}} = \Pr( \boldsymbol{\alpha}_i \boldsymbol{U} \geq \beta_i ) = \Phi (-\beta_i) \, .
\end{equation}
The components of the vector $\boldsymbol{\alpha}_i$, also known as ${\alpha}$-factors of the corresponding component failure event $F_i$, are often used in reliability analysis to assess the contribution of each random variable in $\boldsymbol{U}$ to the probability of component failure.
The connection of the ${\alpha}$-factors with the variance-based indices of Section~\ref{sec:VBrelSens} for component reliability problems has been shown in \cite{papaioannou2021variance}.
%

Define the random variables $Y_i = \boldsymbol{\alpha}_i \boldsymbol{U}, i=1,\ldots,m$.
The variables $Y_i, i=1,\ldots,m,$ have zero means, unit variances, correlation coefficients $\rho_{ij}=\boldsymbol{\alpha}_i \boldsymbol{\alpha}_j^{\mathrm{T}}$ and they follow the multivariate normal distribution.
Define the matrix $\boldsymbol{A}$  of dimension $m \times n$ with $i$-th row equal to $\boldsymbol{\alpha}_i$.
The correlation matrix of the random vector $[Y_1;\ldots;Y_m]$ is $\boldsymbol{R} = \boldsymbol{A}\boldsymbol{A}^\mathrm{T}$.
The probability of the series system problem of Eq.~\eqref{eq:Fser} is approximated by the probability of the event $F_{1,\mathrm{ser}}=\bigcup_{i=1}^m F_{1,i}$, which reads \cite{hohenbichler1982first}:
\begin{equation} \label{eq:PFserFORM}
p_{F,\mathrm{ser}} \cong p_{F_1,\mathrm{ser}} =\Pr \left( \bigcup_{i=1}^m F_{1,i} \right) =\Pr \left( \bigcup_{i=1}^m \{ Y_i \geq \beta_i \} \right)= 1-\Phi_m(\boldsymbol{B},\boldsymbol{R}) \, ,
\end{equation}
where $\Phi_m(\cdot,\boldsymbol{R})$ is the $m$-variate standard normal CDF with correlation matrix $\boldsymbol{R}$ and $\boldsymbol{B} = [\beta_1,\ldots,\beta_m]$.
Similarly, the probability of the parallel system problem of Eq.~\eqref{eq:Fpar} is approximated by the probability of the event $F_{1,\mathrm{par}}=\bigcap_{i=1}^m F_{1,i}$, which gives \citep{hohenbichler1982first}
\begin{equation} \label{eq:PFparFORM}
p_{F,\mathrm{par}} \cong p_{F_1,\mathrm{par}} =\Pr \left( \bigcap_{i=1}^m F_{1,i} \right) =\Pr \left( \bigcap_{i=1}^m \{ Y_i \geq \beta_i \} \right)= \Phi_m(-\boldsymbol{B},\boldsymbol{R}) \, .
\end{equation}
FORM approximations of general system problems can be obtained by expressing the system probability of the linearized components as a sum of parallel system probabilities through application of Poincar\'e's formula (aka inclusion-exclusion principle), which gives \cite{ADK05}:
\begin{equation} \label{eq:PFsysFORM}
p_{F,\mathrm{sys}} \cong p_{F_1,\mathrm{sys}} = \Pr \left( \bigcup_{k=1}^K C_{1,k} \right) \, = \sum_{\boldsymbol{v} \in \mathcal{P}(\{1,\ldots,K\}) \backslash \{ \emptyset \} } (-1)^{| \boldsymbol{v} |-1} \Pr \left( \bigcap_{i \in \boldsymbol{v}} C_{1,i} \right),
\end{equation}
where $C_{1,k} = \cap_{i \in \boldsymbol{c}_k} F_{1,i}$.
The parallel system probabilities in Eq.~\eqref{eq:PFsysFORM} are approximated through the FORM parallel system approximation of Eq.~\eqref{eq:PFparFORM}.
We note that accurate estimation of the general system problem requires linearization of the individual components entering each parallel system in Eq.~\eqref{eq:PFsysFORM} at the corresponding joint design point.

As seen in Eqs.~\eqref{eq:PFserFORM}--\eqref{eq:PFsysFORM}, the FORM approximations of system reliability problems require the evaluation of the multinormal probability integral, the estimation of which becomes cumbersome with increasing number of component failure events. 
Hence, system reliability bounds are often used, which require evaluation of joint probabilities of a small number of cut sets \cite{ditlevsen1979narrow,zhang1993high,song2003bounds}.

\begin{remark}
The FORM approximations of the series system problem can also be used to approximate the probability of failure of component problems with multiple design points, i.e., situations where Eq.~\eqref{eq:FORMopt} has multiple local minima, provided that all relevant design points can be identified efficiently, e.g., through application of the method discussed in \cite{der1998multiple}.
\end{remark}

\section{Variance-based reliability sensitivities of system problems with FORM} \label{s:sensitivities}

We derive expressions for the reliability sensitivity indices defined in Eq.~\eqref{eq:SobolRel} and Eq.~\eqref{eq:totalEffectRel} of the system reliability problems described by the linearized LSFs of Eq.~\eqref{eq:G1}.
Define the Bernoulli random variables $Z_{1,i} = \mathrm{I}_{F,i}(\boldsymbol{U} )$, describing the geometry of the failure domain of the FORM approximation of the component failure events in $\boldsymbol{U}$-space.
$Z_{1,i}$ has outcome 1 if $G_{1,i}(\boldsymbol{u}) \leq 0$ and 0 otherwise.
For the series problem, we define $Z_{1,\mathrm{ser}} = 1-\prod_{i=1}^m (1-Z_{1,i})$, and for the parallel problem we define $Z_{1,\mathrm{par}} = \prod_{i=1}^m Z_{1,i}$.
$Z_{1,\mathrm{ser}}$ and $Z_{1,\mathrm{par}}$ are Bernoulli random variables with parameters $p_{F_1,\mathrm{ser}}$ and $p_{F_1,\mathrm{par}}$, respectively.
The Sobol' indices $S_{F_1,\mathrm{ser},\boldsymbol{v}}$ and $S_{F_1,\mathrm{par},\boldsymbol{v}}$ of the FORM approximation of the series and parallel system failure events associated with index set $\boldsymbol{v}$ can be defined as
\begin{align}
S_{F_1,\mathrm{ser},\boldsymbol{v}} &=\frac{V_{F_1,\mathrm{ser},\boldsymbol{v}}}{\mathrm{Var}(Z_{1,\mathrm{ser}})}=\frac{V_{F_1,\mathrm{ser},\boldsymbol{v}}}{p_{F_{1,\mathrm{ser}}}(1-p_{F_{1,\mathrm{ser}}})} \, , \label{eq:SobolRel1ser} \\
S_{F_1,\mathrm{par},\boldsymbol{v}} &=\frac{V_{F_1,\mathrm{par},\boldsymbol{v}}}{\mathrm{Var}(Z_{1,\mathrm{par}})}=\frac{V_{F_1,\mathrm{par},\boldsymbol{v}}}{p_{F_{1,\mathrm{par}}}(1-p_{F_{1,\mathrm{par}}})} \, , \label{eq:SobolRel1par}
\end{align} 
with
\begin{align}
V_{F_1,\mathrm{ser},\boldsymbol{v}} & = \mathrm{Var}(\mathrm{E}[Z_{1,\mathrm{ser}} \vert \boldsymbol{U}_{\boldsymbol{v}} ]) - \sum_{\boldsymbol{w} \in \mathcal{P}(\boldsymbol{v}) \backslash \{\emptyset,\boldsymbol{v} \} } V_{F_1,\mathrm{ser},\boldsymbol{w}} \, ,  \label{eq:ViCondExpF1ser} \\
V_{F_1,\mathrm{par},\boldsymbol{v}} & = \mathrm{Var}(\mathrm{E}[Z_{1,\mathrm{par}} \vert \boldsymbol{U}_{\boldsymbol{v}} ]) - \sum_{\boldsymbol{w} \in \mathcal{P}(\boldsymbol{v}) \backslash \{\emptyset,\boldsymbol{v} \} } V_{F_1,\mathrm{par},\boldsymbol{w}} \, . \label{eq:ViCondExpF1par}
\end{align}
Evaluation of $S_{F_1,\mathrm{ser},\boldsymbol{v}}$ and $S_{F_1,\mathrm{par},\boldsymbol{v}}$ requires evaluation of the variances of conditional expectations $\mathrm{Var}(\mathrm{E}[Z_{1,\mathrm{ser}} \vert \boldsymbol{U}_{\boldsymbol{v}} ])$ and $\mathrm{Var}(\mathrm{E}[Z_{1,\mathrm{par}} \vert \boldsymbol{U}_{\boldsymbol{v}} ])$, for all $\boldsymbol{v} \in \mathcal{P}(\{1,\ldots,n\})$.
\begin{prop}
\label{prop:VarEZ1sys}
The variances of conditional expectations of the random variables $Z_{1,\mathrm{ser}}$ and $Z_{1,\mathrm{par}}$, $\mathrm{Var}(\mathrm{E}[Z_{1,\mathrm{ser}} \vert \boldsymbol{U}_{\boldsymbol{v}} ])$ and $\mathrm{Var}(\mathrm{E}[Z_{1,\mathrm{par}} \vert \boldsymbol{U}_{\boldsymbol{v}} ])$, with $\boldsymbol{U}_{\boldsymbol{v}}= \{U_i,i \in \boldsymbol{v} \}$ can be expressed through the following multivariate normal integrals:
\begin{equation} \label{eq:VarCondExpZ1ser}
\mathrm{Var}(\mathrm{E}[Z_{1,\mathrm{ser}} \vert \boldsymbol{U}_{\boldsymbol{v}} ]) = \Phi_{2m}\left(\begin{bmatrix} \boldsymbol{B} \\ \boldsymbol{B} \end{bmatrix},\begin{bmatrix} \boldsymbol{R} & \boldsymbol{R}_{\boldsymbol{v}} \\ \boldsymbol{R}_{\boldsymbol{v}} & \boldsymbol{R} \end{bmatrix} \right)-(1-p_{F_1,\mathrm{ser}})^2 \, ,
\end{equation}
and
\begin{equation} \label{eq:VarCondExpZ1par}
\mathrm{Var}(\mathrm{E}[Z_{1,\mathrm{par}} \vert \boldsymbol{U}_{\boldsymbol{v}} ]) = \Phi_{2m}\left(-\begin{bmatrix} \boldsymbol{B} \\ \boldsymbol{B} \end{bmatrix},\begin{bmatrix} \boldsymbol{R} & \boldsymbol{R}_{\boldsymbol{v}} \\ \boldsymbol{R}_{\boldsymbol{v}} & \boldsymbol{R} \end{bmatrix} \right)-p_{F_1,\mathrm{par}}^2 \, ,
\end{equation}
where $\boldsymbol{R}_{\boldsymbol{v}}=\boldsymbol{A}_{\boldsymbol{v}}\boldsymbol{A}_{\boldsymbol{v}}^{\mathrm{T}}$ and $\boldsymbol{A}_{\boldsymbol{v}}$ is a matrix of dimensions $m \times |\boldsymbol{v}|$ whose $i$-th row is equal to $\boldsymbol{\alpha}_{i,\boldsymbol{v}}=\{ \alpha_{i,j},j \in \boldsymbol{v} \}$, $\alpha_{i,j}$ being the $(i,j)$ element of matrix $\boldsymbol{A}$.
\end{prop}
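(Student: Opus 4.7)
The plan is to use the standard identity $\mathrm{Var}(\mathrm{E}[Z\vert\boldsymbol{U}_{\boldsymbol{v}}])=\mathrm{E}[Z^{(1)} Z^{(2)}]-(\mathrm{E}[Z])^2$, where $Z^{(1)}$ and $Z^{(2)}$ are two copies of $Z$ constructed so that they share $\boldsymbol{U}_{\boldsymbol{v}}$ but use independent replicas of $\boldsymbol{U}_{\sim\boldsymbol{v}}$. Concretely, I would introduce an independent copy $\boldsymbol{U}^{(2)}_{\sim\boldsymbol{v}}$ of $\boldsymbol{U}_{\sim\boldsymbol{v}}$ (and set $\boldsymbol{U}^{(1)}_{\sim\boldsymbol{v}}=\boldsymbol{U}_{\sim\boldsymbol{v}}$), and define $Y_i^{(j)}=\boldsymbol{\alpha}_{i,\boldsymbol{v}}\boldsymbol{U}_{\boldsymbol{v}}+\boldsymbol{\alpha}_{i,\sim\boldsymbol{v}}\boldsymbol{U}^{(j)}_{\sim\boldsymbol{v}}$ for $i=1,\dots,m$ and $j=1,2$. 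By construction, conditional on $\boldsymbol{U}_{\boldsymbol{v}}$, the vectors $(Y_1^{(1)},\dots,Y_m^{(1)})$ and $(Y_1^{(2)},\dots,Y_m^{(2)})$ are independent, which gives the tower/iid-copy identity above.

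Next I would identify the joint distribution of the $2m$ variables $(Y_1^{(1)},\dots,Y_m^{(1)},Y_1^{(2)},\dots,Y_m^{(2)})$. Each is a linear combination of independent standard normals, hence the whole vector is zero-mean multivariate normal with unit variances. For the within-block correlations one recovers $\mathrm{Cov}(Y_i^{(j)},Y_k^{(j)})=\boldsymbol{\alpha}_i\boldsymbol{\alpha}_k^{\mathrm{T}}=R_{ik}$, so each diagonal block equals $\boldsymbol{R}$. For the cross-block entries, only the shared components $\boldsymbol{U}_{\boldsymbol{v}}$ contribute and the independence of $\boldsymbol{U}^{(1)}_{\sim\boldsymbol{v}}$ and $\boldsymbol{U}^{(2)}_{\sim\boldsymbol{v}}$ kills the rest, leaving $\mathrm{Cov}(Y_i^{(1)},Y_k^{(2)})=\boldsymbol{\alpha}_{i,\boldsymbol{v}}\boldsymbol{\alpha}_{k,\boldsymbol{v}}^{\mathrm{T}}$, which by definition is the $(i,k)$ entry of $\boldsymbol{R}_{\boldsymbol{v}}=\boldsymbol{A}_{\boldsymbol{v}}\boldsymbol{A}_{\boldsymbol{v}}^{\mathrm{T}}$. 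This produces precisely the block correlation matrix appearing in the statement.

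For the parallel case, $Z_{1,\mathrm{par}}^{(1)}Z_{1,\mathrm{par}}^{(2)}=\prod_{i=1}^m \mathrm{I}\{Y_i^{(1)}\ge\beta_i\}\mathrm{I}\{Y_i^{(2)}\ge\beta_i\}$, so taking expectation gives the joint orthant probability $\Pr(Y_i^{(j)}\ge\beta_i,\forall i,j)=\Phi_{2m}(-[\boldsymbol{B};\boldsymbol{B}],\cdot)$, and subtracting $p_{F_1,\mathrm{par}}^2$ yields \eqref{eq:VarCondExpZ1par}. For the series case I would rewrite $Z_{1,\mathrm{ser}}=1-\prod_{i=1}^m(1-Z_{1,i})$ and note that $\mathrm{Var}(\mathrm{E}[Z_{1,\mathrm{ser}}\vert\boldsymbol{U}_{\boldsymbol{v}}])=\mathrm{Var}(\mathrm{E}[W\vert\boldsymbol{U}_{\boldsymbol{v}}])$, with $W=\prod_i(1-Z_{1,i})=\mathrm{I}\{Y_i<\beta_i,\forall i\}$ and $\mathrm{E}[W]=1-p_{F_1,\mathrm{ser}}$. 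The same two-copy argument then gives $\mathrm{E}[W^{(1)}W^{(2)}]=\Phi_{2m}([\boldsymbol{B};\boldsymbol{B}],\cdot)$, proving \eqref{eq:VarCondExpZ1ser}.

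The only genuinely non-routine step is the covariance computation, specifically recognizing that the cross-block covariance collapses to $\boldsymbol{A}_{\boldsymbol{v}}\boldsymbol{A}_{\boldsymbol{v}}^{\mathrm{T}}$ because the unshared coordinates of the two copies are independent; everything else (the iid-copy identity, linear-combination normality, and the reduction of the series case to the complementary event) is mechanical once that block structure is in place.
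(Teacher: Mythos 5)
Your proposal is correct and follows essentially the same route as the paper: the conditionally independent ``pick--freeze'' copies you introduce are exactly the paper's auxiliary vectors $\widetilde{\boldsymbol{U}}_1,\widetilde{\boldsymbol{U}}_2$ (up to sign and a change of variables), and the key step---identifying the $2m$-variate normal with diagonal blocks $\boldsymbol{R}$ and cross-block $\boldsymbol{R}_{\boldsymbol{v}}$ coming from the shared coordinates $\boldsymbol{U}_{\boldsymbol{v}}$---is identical. Your treatment of the series case via the complementary indicator $W=\prod_i(1-Z_{1,i})$ is a slightly cleaner shortcut than the paper's expansion of $\bigl(1-\Phi_m\bigr)^2$, but it amounts to the same computation.
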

\begin{proof}
We start with Eq.~\eqref{eq:VarCondExpZ1par}.
The variance $\mathrm{Var}(\mathrm{E}[Z_{1,\mathrm{par}} \vert \boldsymbol{U}_{\boldsymbol{v}} ])$ can be expanded as follows
\begin{equation} \label{eq:CondVarZ1par}
\mathrm{Var}(\mathrm{E}[Z_{1,\mathrm{par}} \vert \boldsymbol{U}_{\boldsymbol{v}} ]) = \mathrm{Var}(\Pr(F_{1,\mathrm{par}} \vert \boldsymbol{U}_{\boldsymbol{v}} )) = \mathrm{E}[\Pr(F_{1,\mathrm{par}} \vert \boldsymbol{U}_{\boldsymbol{v}} )^2 ] - \mathrm{E}[\Pr(F_{1,\mathrm{par}} \vert \boldsymbol{U}_{\boldsymbol{v}} ) ]^2 \, .
\end{equation}
The mean of the conditional probability of $F_{1,\mathrm{par}}$ is equal to the unconditional probability,
\begin{equation} \label{eq:ECondPF1par}
\mathrm{E}[\Pr(F_{1,\mathrm{par}} \vert \boldsymbol{U}_{\boldsymbol{v}} ) ] = p_{F_1,\mathrm{par}} \, .
\end{equation} 
The conditional probability of $F_{1,\mathrm{par}}$ given $\{ \boldsymbol{U}_{\boldsymbol{v}}=\boldsymbol{u}_{\boldsymbol{v}} \}$ reads:
\begin{equation}
\Pr(F_{1,\mathrm{par}} \vert \boldsymbol{U}_{\boldsymbol{v}} =\boldsymbol{u}_{\boldsymbol{v}} ) = \Pr \left( \bigcap_{i=1}^m \{ \boldsymbol{\alpha}_{i, \sim \boldsymbol{v}}  \boldsymbol{U}_{\sim \boldsymbol{v}} \geq \beta_i - \boldsymbol{\alpha}_{i,\boldsymbol{v}}  \boldsymbol{u}_{\boldsymbol{v}}\} \right) \, .
\end{equation}
The variables $\{\overline{Y}_i =  \boldsymbol{\alpha}_{i, \sim \boldsymbol{v}}  \boldsymbol{U}_{\sim \boldsymbol{v}},i=1,\ldots,m \}$ follow the multivariate normal distribution with zero mean and covariance matrix $\boldsymbol{R}_{\sim \boldsymbol{v}}=\boldsymbol{A}_{\sim \boldsymbol{v}}\boldsymbol{A}_{\sim \boldsymbol{v}}^{\mathrm{T}}=\boldsymbol{R}-\boldsymbol{R}_{\boldsymbol{v}}$. Therefore:
\begin{equation}
\Pr(F_{1,\mathrm{par}} \vert \boldsymbol{U}_{\boldsymbol{v}} =  \boldsymbol{u}_{\boldsymbol{v}} ) = \Pr \left( \bigcap_{i=1}^m \{ \overline{Y}_i \geq \beta_i - \boldsymbol{\alpha}_{i,\boldsymbol{v}}  \boldsymbol{u}_{\boldsymbol{v}}\} \right)  = \Phi_m(\boldsymbol{A}_{ \boldsymbol{v}}\boldsymbol{u}_{\boldsymbol{v}}-\boldsymbol{B},\boldsymbol{R}-\boldsymbol{R}_{\boldsymbol{v}}) \, .
\end{equation} 
We then have:
\begin{equation} \label{eq:ECondPF1par2der}
\begin{aligned}
&\mathrm{E}[\Pr(F_{1,\mathrm{par}} \vert \boldsymbol{U}_{\boldsymbol{v}} )^2 ] = \mathrm{E}[\Phi_m(\boldsymbol{A}_{ \boldsymbol{v}}\boldsymbol{U}_{\boldsymbol{v}}-\boldsymbol{B},\boldsymbol{R}-\boldsymbol{R}_{\boldsymbol{v}})^2 ] \\
& = \mathrm{E} \left[\Pr \left( \left.\widetilde{\boldsymbol{U}}_1 \leq \boldsymbol{A}_{\boldsymbol{v}}\boldsymbol{U}_{\boldsymbol{v}}-\boldsymbol{B} \right\vert \boldsymbol{U}_{\boldsymbol{v}} \right) \Pr \left( \left.\widetilde{\boldsymbol{U}}_2 \leq \boldsymbol{A}_{\boldsymbol{v}}\boldsymbol{U}_{\boldsymbol{v}}-\boldsymbol{B} \right\vert \boldsymbol{U}_{\boldsymbol{v}} \right) \right] \\
& = \mathrm{E} \left[ \Pr\left( \left. \left\{ \widetilde{\boldsymbol{U}}_1 \leq \boldsymbol{A}_{\boldsymbol{v}}\boldsymbol{U}_{\boldsymbol{v}}-\boldsymbol{B} \right\} \cap \left\{ \widetilde{\boldsymbol{U}}_2 \leq \boldsymbol{A}_{\boldsymbol{v}}\boldsymbol{U}_{\boldsymbol{v}}-\boldsymbol{B} \right\} \right\vert \boldsymbol{U}_{\boldsymbol{v}}   \right)  \right]\\
&=\Pr\left( \left\{ \widetilde{\boldsymbol{U}}_1 \leq \boldsymbol{A}_{ \boldsymbol{v}}\boldsymbol{U}_{\boldsymbol{v}}-\boldsymbol{B} \right\} \cap  \left\{ \widetilde{\boldsymbol{U}}_2 \leq \boldsymbol{A}_{ \boldsymbol{v}}\boldsymbol{U}_{\boldsymbol{v}}-\boldsymbol{B} \right\}  \right) \, ,
\end{aligned}
\end{equation}
where $\widetilde{\boldsymbol{U}}_1$ and $\widetilde{\boldsymbol{U}}_2$ are auxiliary random vectors with zero means and covariance matrices $\boldsymbol{R}-\boldsymbol{R}_{\boldsymbol{v}}$ and each component of $\widetilde{\boldsymbol{U}}_1$ is independent from all components of $\widetilde{\boldsymbol{U}}_2$.
Define the random vectors $\widetilde{\boldsymbol{Y}}_i=\widetilde{\boldsymbol{U}}_i - \boldsymbol{A}_{ \boldsymbol{v}}\boldsymbol{U}_{ \boldsymbol{v}}, i=1,2$. 
Each random vector $\widetilde{\boldsymbol{Y}}_i,i=1,2,$ has zero mean and covariance matrix $\boldsymbol{R}$, while they have cross-covariance matrix $\boldsymbol{R}_{\boldsymbol{v}}$ and, since they are linear functions of normal random variables, they follow the $2m$-variate standard normal distribution. We have:
\begin{equation} \label{eq:ECondPF1par2}
\mathrm{E}[\Pr(F_{1,\mathrm{par}} \vert \boldsymbol{U}_{\boldsymbol{v}} )^2 ] = \Pr\left( \left\{ \widetilde{\boldsymbol{Y}}_1 \leq -\boldsymbol{B} \right\} \cap \left\{ \widetilde{\boldsymbol{Y}}_2 \leq -\boldsymbol{B}\right\} \right) = \Phi_{2m}\left(-\begin{bmatrix} \boldsymbol{B} \\ \boldsymbol{B} \end{bmatrix},\begin{bmatrix} \boldsymbol{R} & \boldsymbol{R}_{\boldsymbol{v}} \\ \boldsymbol{R}_{\boldsymbol{v}} & \boldsymbol{R} \end{bmatrix} \right) \, .
\end{equation}
Combining Eqs.~\eqref{eq:ECondPF1par2}, \eqref{eq:CondVarZ1par} and \eqref{eq:ECondPF1par} we get the result of Eq.~\eqref{eq:VarCondExpZ1par}.

In the same way as Eq.~\eqref{eq:CondVarZ1par}, we can decompose $\mathrm{Var}(\mathrm{E}[Z_{1,\mathrm{ser}} \vert \boldsymbol{U}_{\boldsymbol{v}} ])$ as follows
\begin{equation} \label{eq:CondVarZ1ser}
\mathrm{Var}(\mathrm{E}[Z_{1,\mathrm{ser}} \vert \boldsymbol{U}_{\boldsymbol{v}} ]) = \mathrm{E}[\Pr(F_{1,\mathrm{ser}} \vert \boldsymbol{U}_{\boldsymbol{v}} )^2 ] - p_{F_1,\mathrm{ser}}^2\, .
\end{equation}
It is:
\begin{equation} \label{eq:CondPF1ser}
\begin{aligned}
\Pr(F_{1,\mathrm{ser}} \vert \boldsymbol{U}_{\boldsymbol{v}} =\boldsymbol{u}_{\boldsymbol{v}} ) & = \Pr \left( \bigcup_{i=1}^m \{ \boldsymbol{\alpha}_{i, \sim \boldsymbol{v}}  \boldsymbol{U}_{\sim \boldsymbol{v}} \geq \beta_i - \boldsymbol{\alpha}_{i,\boldsymbol{v}}  \boldsymbol{u}_{\boldsymbol{v}}\} \right) \\
&= 1- \Pr \left( \bigcap_{i=1}^m \{ \boldsymbol{\alpha}_{i, \sim \boldsymbol{v}}  \boldsymbol{U}_{\sim \boldsymbol{v}} < \beta_i - \boldsymbol{\alpha}_{i,\boldsymbol{v}}  \boldsymbol{u}_{\boldsymbol{v}}\} \right) \\
&= 1- \Pr \left( \bigcap_{i=1}^m \{ \overline{Y}_i< \beta_i - \boldsymbol{\alpha}_{i,\boldsymbol{v}}  \boldsymbol{u}_{\boldsymbol{v}}\} \right) \\
& = 1- \Phi_m(\boldsymbol{B}-\boldsymbol{A}_{ \boldsymbol{v}}\boldsymbol{u}_{\boldsymbol{v}},\boldsymbol{R}-\boldsymbol{R}_{\boldsymbol{v}})  \, .
\end{aligned}
\end{equation}
Combining Eqs.~\eqref{eq:CondPF1ser} and \eqref{eq:CondVarZ1ser}, we get
\begin{equation} \label{eq:CondVarZ1ser2}
\begin{aligned}
\mathrm{Var}(\mathrm{E}[Z_{1,\mathrm{ser}} \vert \boldsymbol{U}_{\boldsymbol{v}} ]) & = 1 - 2(1-p_{F_1,\mathrm{ser}})+ \mathrm{E}[\Phi_m(\boldsymbol{B}-\boldsymbol{A}_{ \boldsymbol{v}}\boldsymbol{u}_{\boldsymbol{v}},\boldsymbol{R}-\boldsymbol{R}_{\boldsymbol{v}})^2 ] - p_{F_1,\mathrm{ser}}^2 \\ 
&=  \mathrm{E}[\Phi_m(\boldsymbol{B}-\boldsymbol{A}_{ \boldsymbol{v}}\boldsymbol{u}_{\boldsymbol{v}},\boldsymbol{R}-\boldsymbol{R}_{\boldsymbol{v}})^2 ] - (1- p_{F_1,\mathrm{ser}})^2 \, .
\end{aligned}
\end{equation}
Following the same approach as in Eqs.~\eqref{eq:ECondPF1par2der} and \eqref{eq:ECondPF1par2}, we get
\begin{equation} \label{eq:EPhimser}
\mathrm{E}[\Phi_m(\boldsymbol{B}-\boldsymbol{A}_{ \boldsymbol{v}}\boldsymbol{u}_{\boldsymbol{v}},\boldsymbol{R}-\boldsymbol{R}_{\boldsymbol{v}})^2 ]  = \Phi_{2m}\left(\begin{bmatrix} \boldsymbol{B} \\ \boldsymbol{B} \end{bmatrix},\begin{bmatrix} \boldsymbol{R} & \boldsymbol{R}_{\boldsymbol{v}} \\ \boldsymbol{R}_{\boldsymbol{v}} & \boldsymbol{R} \end{bmatrix} \right) \, .
\end{equation}
Combining Eqs.~\eqref{eq:EPhimser} and \eqref{eq:CondVarZ1ser2} we get the result of Eq.~\eqref{eq:VarCondExpZ1ser}.
\end{proof}

The total-effect indices of $Z_{1,\mathrm{ser}}$ and $Z_{1,\mathrm{par}}$ associated with component indices ${\boldsymbol{v}}$ are given by Eq.~\eqref{eq:totalEffectRel} as
\begin{equation} \label{eq:totalEffectRel1ser}
S_{F_1,\boldsymbol{v}}^T = 1- \frac{\mathrm{Var}(\mathrm{E}[Z_{1,\mathrm{ser}} \vert \boldsymbol{U}_{\sim \boldsymbol{v}} ]) }{\mathrm{Var}(Z_{1,\mathrm{ser}})} = 1- \frac{\mathrm{Var}(\Pr(F_{1,\mathrm{ser}} \vert \boldsymbol{U}_{\sim \boldsymbol{v}} )) }{p_{F_1,\mathrm{ser}}(1-p_{F_1,\mathrm{ser}})} \, ,
\end{equation}
and
\begin{equation} \label{eq:totalEffectRel1par}
S_{F_1,\boldsymbol{v}}^T = 1- \frac{\mathrm{Var}(\mathrm{E}[Z_{1,\mathrm{par}} \vert \boldsymbol{U}_{\sim \boldsymbol{v}} ]) }{\mathrm{Var}(Z_{1,\mathrm{par}})} = 1- \frac{\mathrm{Var}(\Pr(F_{1,\mathrm{par}} \vert \boldsymbol{U}_{\sim \boldsymbol{v}} )) }{p_{F_1,\mathrm{par}}(1-p_{F_1,\mathrm{par}})} \, .
\end{equation}
Evaluation of $S_{F_1,\mathrm{ser},\boldsymbol{v}}^T$ and $S_{F_1,\mathrm{par},\boldsymbol{v}}^T$ requires evaluation of $\mathrm{Var}(\Pr(F_{1,\mathrm{ser}} \vert \boldsymbol{U}_{\sim \boldsymbol{v}} )) $ and $\mathrm{Var}(\Pr(F_{1,\mathrm{par}} \vert \boldsymbol{U}_{\sim \boldsymbol{v}} )) $.
Applying Proposition~\ref{prop:VarEZ1sys} and the fact that $\boldsymbol{R}_{\sim \boldsymbol{v}}=\boldsymbol{R}-\boldsymbol{R}_{\boldsymbol{v}}$ gives
\begin{equation} \label{eq:VarCondExpZ1Tser}
\mathrm{Var}(\Pr(F_{1,\mathrm{ser}} \vert \boldsymbol{U}_{\sim \boldsymbol{v}} )) = \Phi_{2m}\left(\begin{bmatrix} \boldsymbol{B} \\ \boldsymbol{B} \end{bmatrix},\begin{bmatrix} \boldsymbol{R} & \boldsymbol{R} - \boldsymbol{R}_{\boldsymbol{v}} \\ \boldsymbol{R}- \boldsymbol{R}_{\boldsymbol{v}} & \boldsymbol{R} \end{bmatrix} \right) -(1-p_{F_1,\mathrm{ser}})^2 \, ,
\end{equation}
and 
\begin{equation}
\mathrm{Var}(\Pr(F_{1,\mathrm{par}} \vert \boldsymbol{U}_{\sim \boldsymbol{v}} )) = \Phi_{2m}\left(-\begin{bmatrix} \boldsymbol{B} \\ \boldsymbol{B} \end{bmatrix},\begin{bmatrix} \boldsymbol{R} & \boldsymbol{R} - \boldsymbol{R}_{\boldsymbol{v}} \\ \boldsymbol{R}- \boldsymbol{R}_{\boldsymbol{v}} & \boldsymbol{R} \end{bmatrix} \right) -p_{F_1,\mathrm{par}}^2 \, .
\end{equation}

\begin{remark}
The first-order and total-effect reliability sensitivity indices of the series and parallel system problems associated with the $i$-th random variable can be obtained through substituting $\boldsymbol{v} = \{i\}$ in Eqs.~\eqref{eq:SobolRel1ser}-\eqref{eq:SobolRel1par}, and Eqs.~\eqref{eq:totalEffectRel1ser}-\eqref{eq:totalEffectRel1par}.
In this case, the correlation matrix $\boldsymbol{R}_i$ has $(k,l)$ entry $\alpha_{k,i}\alpha_{l,i}$, with $\alpha_{k,i}$ denoting the $i$-th entry of vector $\boldsymbol{\alpha}_k$.
\end{remark}

\begin{remark}
Dividing Eqs. \eqref{eq:VarCondExpZ1ser} and \eqref{eq:VarCondExpZ1par} with $\mathrm{Var}(Z_{1,\mathrm{ser}})$ or $\mathrm{Var}(Z_{1,\mathrm{par}})$ directly results in the closed Sobol' index $S_{F_1,\boldsymbol{v}}^{clo}$ for the series or parallel system.
\end{remark}

The results for the series and parallel system problems can be extended to general systems by application of Eq.~\eqref{eq:PFsysFORM}.
Consider again the definition of the general system of Eq.~\eqref{eq:Fgensys} and let $Z_{1,\mathrm{sys}} = 1-\prod_{k=1}^K (1-\prod_{i\in \boldsymbol{c}_k} Z_{1,i})$.
The Sobol' index $S_{F_1,\mathrm{sys},\boldsymbol{v}}$ and total-effect index $S_{F_1,\mathrm{sys},\boldsymbol{v}}^T$ associated with the set $\boldsymbol{v}$ are given as
\begin{equation} \label{eq:Sobol1ser}
S_{F_1,\mathrm{sys},\boldsymbol{v}} = \frac{V_{F_1,\mathrm{sys},\boldsymbol{v}}}{p_{F_1,\mathrm{sys}}(1-p_{F_1,\mathrm{sys}})} \, , \qquad S_{F_1,\mathrm{sys},\boldsymbol{v}}^T = 1- \frac{\mathrm{Var}(\mathrm{E}[Z_{1,\mathrm{sys}} \vert \boldsymbol{U}_{\sim \boldsymbol{v}} ])  }{p_{F_1,\mathrm{sys}}(1-p_{F_1,\mathrm{sys}})} \, ,
\end{equation}
with $V_{F_1,\mathrm{sys},\boldsymbol{v}} = \mathrm{Var}(\mathrm{E}[Z_{1,\mathrm{sys}} \vert \boldsymbol{U}_{\boldsymbol{v}} ]) - \sum_{\boldsymbol{w} \in \mathcal{P}(\boldsymbol{v}) \backslash \{\emptyset,\boldsymbol{v} \} } V_{F_1,\mathrm{psys},\boldsymbol{w}}$.
The variance of conditional expectation of the general system is obtained as
\begin{equation}
 \mathrm{Var}(\mathrm{E}[Z_{1,\mathrm{sys}} \vert \boldsymbol{U}_{\boldsymbol{v}} ])  =\mathrm{E}[\Pr(F_{1,\mathrm{sys}} \vert \boldsymbol{U}_{\boldsymbol{v}} )^2 ] - p_{F_1,\mathrm{sys}}^2\, ,
\end{equation}
where
\begin{equation} \label{eq:ECondPF1sys}
\footnotesize
\begin{aligned}
 & \mathrm{E}[\Pr(F_{1,\mathrm{sys}} \vert \boldsymbol{U}_{\boldsymbol{v}} )^2 ] =\\
 & \sum_{\boldsymbol{z} \in \mathcal{P}(\{1,\ldots,K\}) \backslash \{ \emptyset \} }  \sum_{\boldsymbol{w} \in \mathcal{P}(\{1,\ldots,K\}) \backslash \{ \emptyset \} } (-1)^{| \boldsymbol{z} | +| \boldsymbol{w} | -2} \Phi_{| \cap_{k \in \boldsymbol{z}} \boldsymbol{c}_k |  +| \cap_{k \in \boldsymbol{w}} \boldsymbol{c}_k | }\left(-\begin{bmatrix} \boldsymbol{B}_{\boldsymbol{z}} \\ \boldsymbol{B}_{\boldsymbol{w}} \end{bmatrix},\begin{bmatrix} \boldsymbol{R}_{\boldsymbol{z}} & \boldsymbol{R}_{\boldsymbol{z}\boldsymbol{w},\boldsymbol{v}} \\ \boldsymbol{R}_{\boldsymbol{z}\boldsymbol{w},\boldsymbol{v}}^\mathrm{T}  & \boldsymbol{R}_{\boldsymbol{w}} \end{bmatrix} \right) \, .
 \end{aligned}
\end{equation}
It is $\boldsymbol{B}_{\boldsymbol{z}} = [\beta_i,i\in \cap_{k \in \boldsymbol{z}} \boldsymbol{c}_k]$, $\boldsymbol{R}_{\boldsymbol{z}} =  \boldsymbol{A}_{\boldsymbol{z}} \boldsymbol{A}_{\boldsymbol{z}}^\mathrm{T} $ with $ \boldsymbol{A}_{\boldsymbol{z}}$ being a matrix of dimensions $|\cap_{k \in \boldsymbol{z}} \boldsymbol{c}_k | \times n$ whose $i$-th row is equal to $\boldsymbol{\alpha}_i  ,i\in \cap_{k \in \boldsymbol{z}} \boldsymbol{c}_k $, and $\boldsymbol{R}_{\boldsymbol{z}\boldsymbol{w},\boldsymbol{v}} =  \boldsymbol{A}_{\boldsymbol{z},\boldsymbol{v}} \boldsymbol{A}_{\boldsymbol{w},\boldsymbol{v}}^\mathrm{T} $.
The result of Eq.~\eqref{eq:ECondPF1sys} is obtained through expanding the conditional probability inside the expectation by use of the inclusion-exclusion principle of Eq.~\eqref{eq:PFsysFORM} and applying the same technique as in Eqs.~\eqref{eq:ECondPF1par2der}-\eqref{eq:ECondPF1par2} to each resulting summand.
Similarly, for the total effect indices, we have 
\begin{equation}
 \mathrm{Var}(\mathrm{E}[Z_{1,\mathrm{sys}} \vert \boldsymbol{U}_{\sim \boldsymbol{v}} ])  =\mathrm{E}[\Pr(F_{1,\mathrm{sys}} \vert \boldsymbol{U}_{\sim \boldsymbol{v}} )^2 ] - p_{F_1,\mathrm{sys}}^2\, ,
\end{equation}
where
\begin{equation}  \label{eq:ECondPFTsys}
\scriptsize
\begin{aligned}
 & \mathrm{E}[\Pr(F_{1,\mathrm{sys}} \vert \boldsymbol{U}_{\sim\boldsymbol{v}} )^2 ] =\\
 & \sum_{\boldsymbol{z} \in \mathcal{P}(\{1,\ldots,K\}) \backslash \{ \emptyset \} }  \sum_{\boldsymbol{w} \in \mathcal{P}(\{1,\ldots,K\}) \backslash \{ \emptyset \} } (-1)^{| \boldsymbol{z} | +| \boldsymbol{w} | -2} \Phi_{| \cap_{k \in \boldsymbol{z}} \boldsymbol{c}_k |  +| \cap_{k \in \boldsymbol{w}} \boldsymbol{c}_k | }\left(-\begin{bmatrix} \boldsymbol{B}_{\boldsymbol{z}} \\ \boldsymbol{B}_{\boldsymbol{w}} \end{bmatrix},\begin{bmatrix} \boldsymbol{R}_{\boldsymbol{z}} & \boldsymbol{R}_{\boldsymbol{z}\boldsymbol{w}}-\boldsymbol{R}_{\boldsymbol{z}\boldsymbol{w},\boldsymbol{v}} \\ (\boldsymbol{R}_{\boldsymbol{z}\boldsymbol{w}}-\boldsymbol{R}_{\boldsymbol{z}\boldsymbol{w},\boldsymbol{v}} )^\mathrm{T}  & \boldsymbol{R}_{\boldsymbol{w}} \end{bmatrix} \right) \, .
 \end{aligned}
\end{equation}
The detailed proofs of Eqs.~\eqref{eq:ECondPF1sys} and \eqref{eq:ECondPFTsys} are omitted for the sake of brevity.

\begin{remark}
Following Remark 4.1 in \cite{papaioannou2021variance}, the results presented here correspond to the Sobol’ and total-effect indices of $Z_{1,\mathrm{ser}}$, $Z_{1,\mathrm{par}}$ and $Z_{1,\mathrm{sys}}$ with respect to the original variables $\boldsymbol{X}$ for the case where $\boldsymbol{X}$ consists of statistically independent components.
That is, the presented results can be used to approximate the sensitivity indices of general system problems with independent inputs.
\end{remark}

\begin{remark}
The derived FORM approximations of the reliability sensitivity indices require the evaluation of multinormal integrals of dimension twice the number of components.
In problems with many components, the accurate and efficient evaluation of these integrals necessitates the use of tailored algorithms \cite[e.g.,][]{ambartzumian1998multinormal,botev2017normal,gessner2020integrals}.
In the numerical examples, we estimate the multinormal integrals using the Matlab function provided by the author of \cite{botev2017normal} and, if not otherwise noted, use $5 \times 10^7$ samples.
\end{remark}

\section{Numerical examples} \label{s:examples}

\subsection{Illustrative example}

The first example examines the behaviour of the reliability sensitivity indices in a two-dimensional and two-component system with LSFs given by linear functions of independent standard normal variables $\boldsymbol{U} = [U_1;U_2]$.
The components have LSFs,
\begin{equation} \nonumber
G_{i}(\boldsymbol{U}) = \beta_i - \boldsymbol{\alpha}_i \boldsymbol{U}, \qquad i=1,2 \, ,
\end{equation}
with $\boldsymbol{\alpha}_1 = [1/ \sqrt{2}, 1/ \sqrt{2}]$ and we vary the angle $\theta$ between $\boldsymbol{\alpha}_1$ and $\boldsymbol{\alpha}_2$, as illustrated in Fig~\ref{f:LinearSysIllustration}.
That is, $\boldsymbol{\alpha}_2 = [1/ \sqrt{2} \cos \theta +  1/ \sqrt{2} \sin \theta, 1/ \sqrt{2} \cos \theta -  1/ \sqrt{2} \sin \theta]$.
For each $\theta$, we evaluate the first-order and total-effect indices of the series and parallel systems for random variable $U_2$.
As the two component LSFs are linear in $\boldsymbol{U}$, we can evaluate the sensitivities by application of the expressions derived in Section~\ref{s:sensitivities}.
We also evaluate the sensitivities of the two component problems through application of the expressions derived in \cite{papaioannou2021variance}.

We first consider the case where $\beta_1 = \beta_2 = 2$.
The first-order and total-effect indices of the parallel, series and component problems for random variable $U_2$ are shown in Fig.~\ref{f:LinearSysS1ST1} for varying $\theta$.
We see that for $\theta = 0 ^{\circ}$ both the first-order and total-effect indices of the system problems are identical to the component indices.
The sensitivity indices of component 2 reach their maximum at $\theta = 45 ^{\circ}$, where $G_2$ becomes parallel to the horizontal axis and $U_2$ dominates the problem, and their minimum at $\theta = 135 ^{\circ}$, where $G_2$ is parallel to the vertical axis and $U_2$ has no influence.
The sensitivities of the series system problem have a similar behaviour to the ones of component 2, albeit with decreased magnitude due to the influence of component 1, which leads to a decrease in the importance of variable $U_2$.
The parallel system problem exhibits different behaviour.
The first-order index increases as $\theta$ increases until $\theta \approx 70 ^{\circ}$, where the influence of $U_2$ on the joint failure domain becomes maximum, and decreases again to a minimum value of $\approx 0$ as the joint failure domain becomes negligible.
The total effect index increases monotonically, until it reaches a value close to 1 for $\theta > 50 ^{\circ}$.
This behaviour is in agreement with the findings in \cite{papaioannou2021variance}, where it is shown that as the probability of failure of component problems becomes smaller, the first-order index decreases and total-effect index increases since the variance of the indicator function of the failure domain is dominated by high-order effects.

\begin{figure}[t!]
  \centering
    \includegraphics[width=\textwidth]{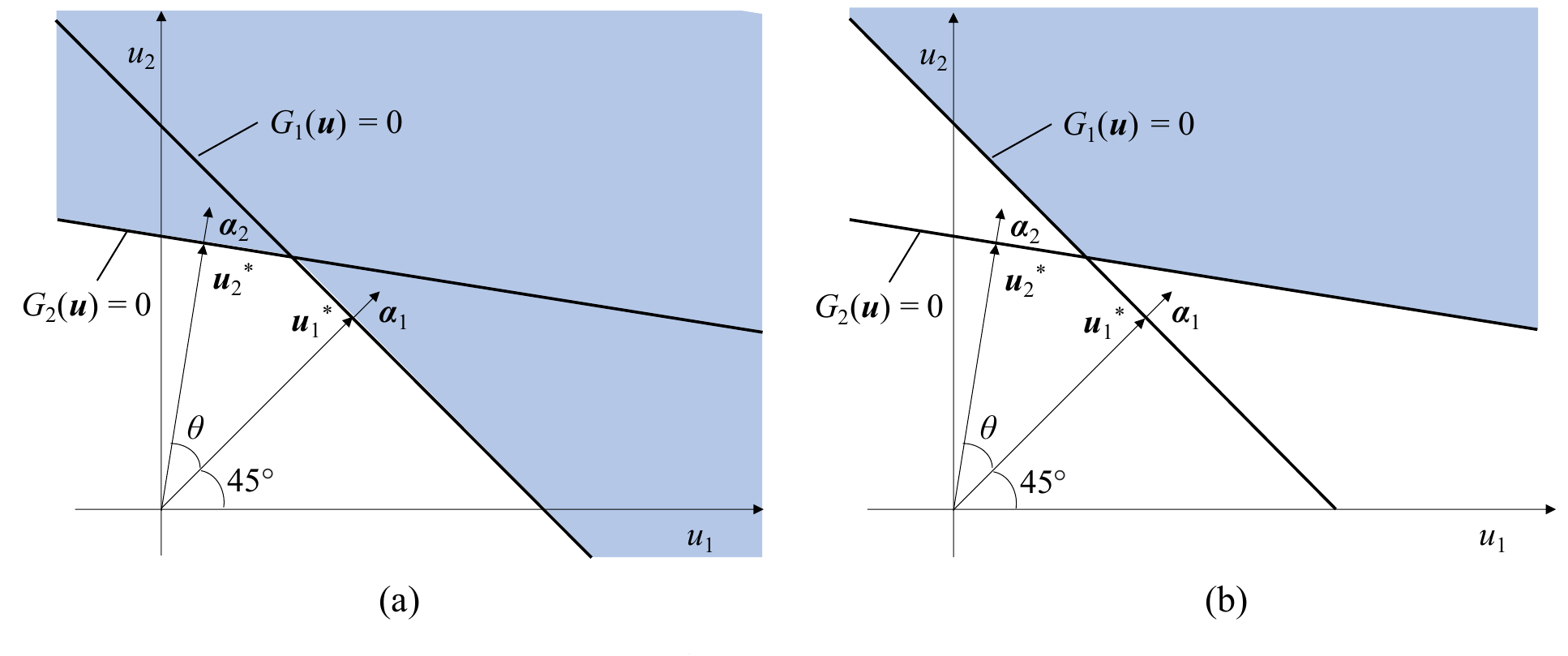}
    \caption{Illustration of the failure domains (shaded) of the (a) series and (b) parallel system problem with two linear component LSFs. Component 1 has $\boldsymbol{\alpha}_1 = [1/ \sqrt{2}, 1/ \sqrt{2}]$ and $\theta$ denotes the angle between $\boldsymbol{\alpha}_1$ and $\boldsymbol{\alpha}_2$.}
\label{f:LinearSysIllustration}
\end{figure}

\begin{figure}[t!]
  \centering
    \includegraphics[width=\textwidth]{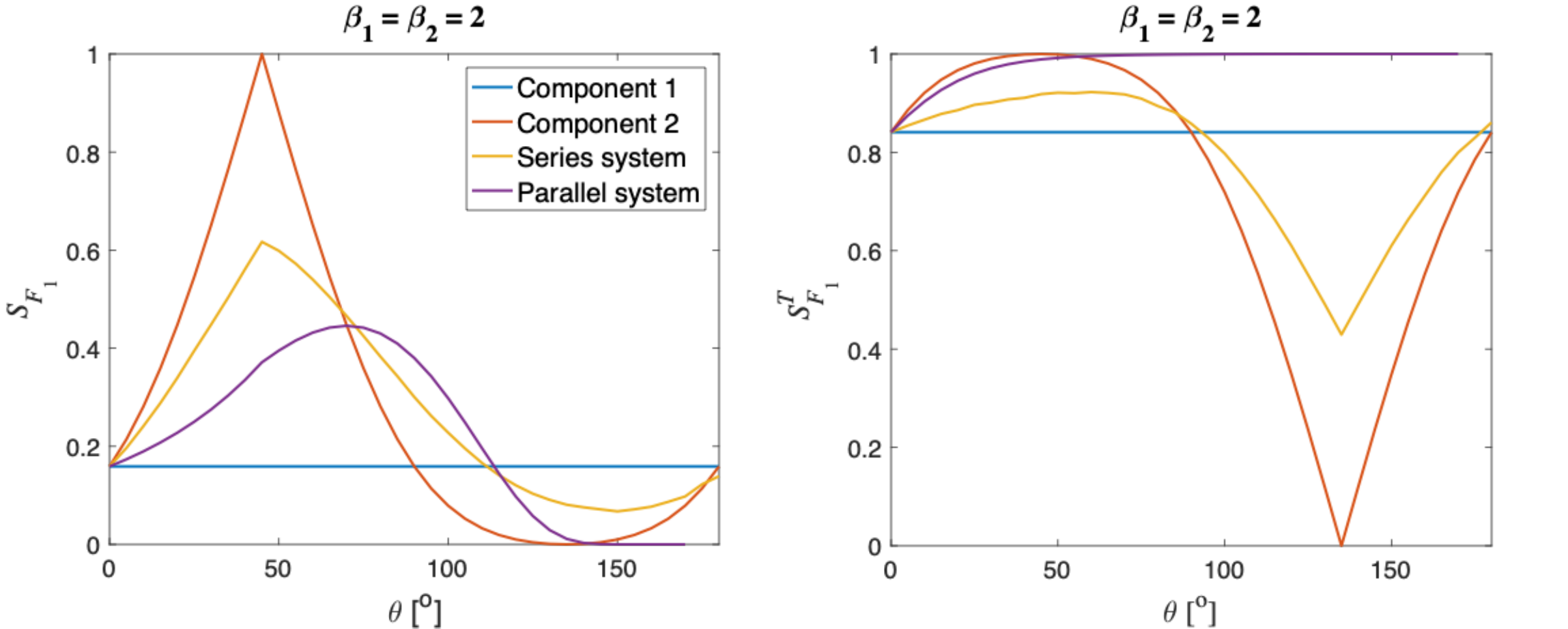}
    \caption{First-order and total-effect indices of input $U_2$ for the parallel and series system problems with linear LSFs as a function of the angle $\theta$ between them.}
\label{f:LinearSysS1ST1}
\end{figure}

Next, we look at the case where one of the two components has dominant contribution to the failure domain.
The first-order index of $U_2$ for the case where component 1 is dominant ($\beta_1 = 2, \beta_2 = 3$) and the case where component 2 is dominant ($\beta_1 = 3, \beta_2 = 2$) are shown is Fig.~\ref{f:LinearSysS1ST2}.
In both cases, the series system is governed by the dominant component, hence, the first-order index of the series system behaves similar to the index of component 1 in case 1 and component 2 in case 2.
Conversely, the parallel system is governed by the minor component, and the first-order index behaves like the index of component 2 in case 1 and component 1 in case 2. 
In both cases, the first-order index of the parallel system problem takes values close to zero for large $\theta$, which implies negligible probability mass in the failure domain.

\begin{figure}[t!]
  \centering
    \includegraphics[width=\textwidth]{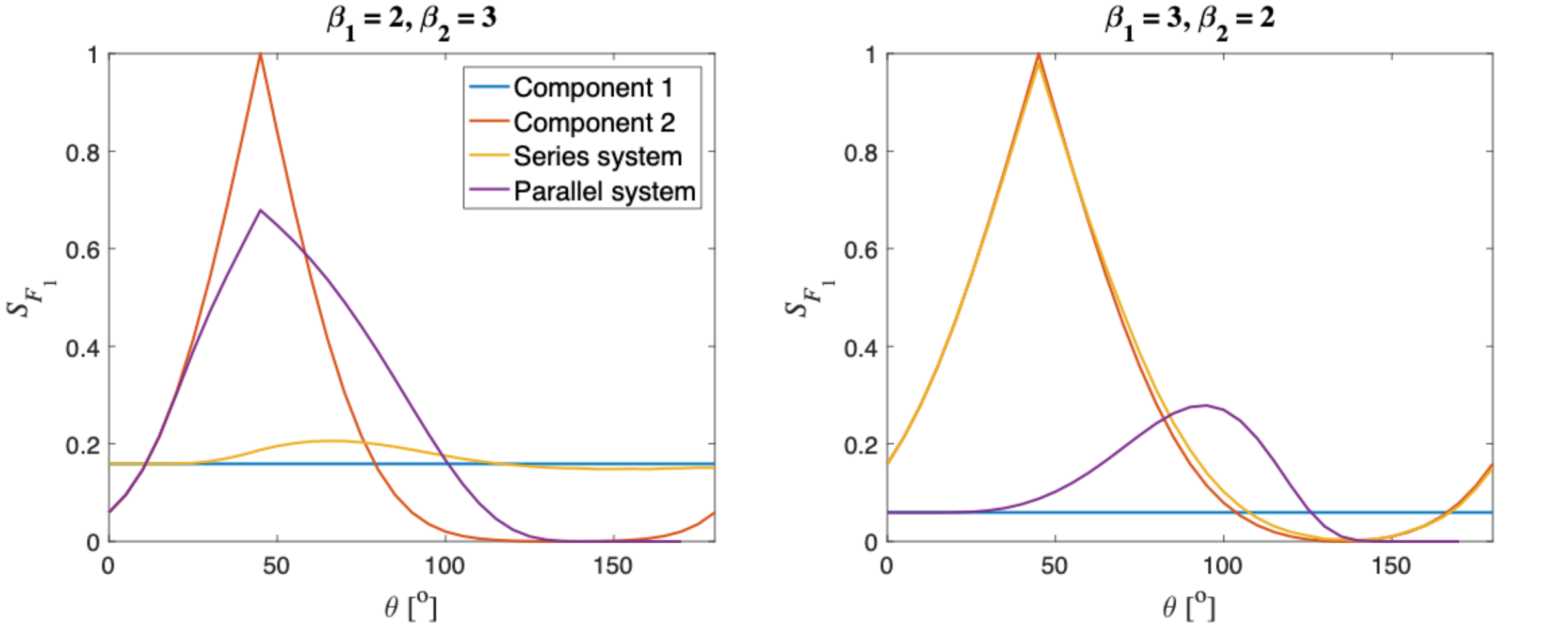}
    \caption{First-order indices of input $U_2$ for the parallel and series system problems with linear LSFs as a function of the angle $\theta$ between them. Left: Component 1 is dominant. Right: Component 2 is dominant.}
\label{f:LinearSysS1ST2}
\end{figure}

\subsection{Series system: Elastoplastic frame} \label{sec:serFORMExample}
We consider an elastoplastic frame structure subjected to a horizontal load, as shown in Fig.~\ref{f:frame}. This example is taken from \cite{zhao2003system}. The frame has four potential failure modes represented by the following LSFs:
\begin{eqnarray}  \nonumber
g_{1}(M_1,M_2,M_3,S) &=& 2M_1+2M_3-4.5S \, , \\ \nonumber
g_{2}(M_1,M_2,M_3,S) &=& 2M_1+M_2+M_3-4.5S \, , \\ \nonumber
g_{3}(M_1,M_2,M_3,S) &=& M_1+M_2+2M_3-4.5S \, , \\ \nonumber
g_{4}(M_1,M_2,M_3,S) &=& M_1+2M_2+M_3-4.5S \, . 
\end{eqnarray}
The system failure event is a series system with $i$-th component failure mode represented by LSF $g_i$.
The plastic moments $M_i, i=1,2,3$, and the load $S$ are modeled by independent lognormal random variables with means and standard deviations given in Table~\ref{t:modelex2}.

\begin{figure}[h]
  \centering
    \includegraphics[width=0.45\textwidth]{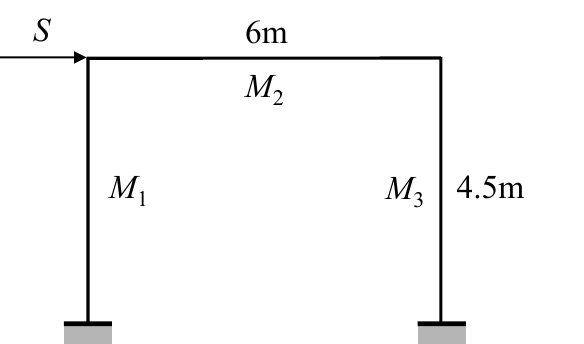}
    \caption{Elastoplastic frame according to \cite{zhao2003system}.}
\label{f:frame}
\end{figure}

\begin{table}
\caption{Uncertain parameters of the frame example.}
\footnotesize
\begin{center}
\begin{tabular}{llll}\hline
\noalign{\smallskip}
Parameter&Distribution&Mean&St. Dev.\\
\noalign{\smallskip}\hline
\noalign{\smallskip}
$M_i$ $(i=1,2,3)$ [tm]&Lognormal&$200$&$30$\\
$S$ [t]&Lognormal&$50$&$20$\\
\noalign{\smallskip}\hline
\end{tabular}
\end{center}
\label{t:modelex2}
\end{table}

The reference solution of the probability of failure of the system problem is evaluated with Monte Carlo and $n_s = 10^8$ samples as $p_{F,MC}= 5.32 \times 10^{-4}$.
Monte Carlo estimates of the sensitivity indices are obtained using the classical pick-freeze estimators (e.g., see \cite{prieur2017variance}), which require $n_s ( n+2)$ LSF evaluations.
The FORM estimate of the series system probability, obtained using the individual design points of each component problem, is $p_{F_1} = 5.57 \times 10^{-4}$.
The first-order and total-effect indices estimated with Monte Carlo and FORM are given in Table~\ref{t:resultsex2}.
The FORM estimates of the total-effect indices show excellent agreement with the Monte Carlo reference solution.
The first-order index of the load variable is in agreement with the Monte Carlo reference, whereas the indices of the plastic moments overestimate the reference values, although the absolute error of all first-order indices is comparable.
We note that the FORM results for this example depend strongly on the setting of the algorithm used to estimate the multivariate normal integrals \cite{botev2017normal}; particularly, $10^8$ samples was necessary to obtain a convergent solution.
This is likely due to the strong correlation between the individual component failure modes -- the off-diagonal entries of the matrix $\boldsymbol{R}$ range between $0.975$ and $0.992$ -- which leads to badly conditioned correlation matrices in Eqs.~\eqref{eq:VarCondExpZ1ser} and \eqref{eq:VarCondExpZ1Tser}.

\begin{table}
\caption{Estimates of the first-order and total-effect indices for the frame example.}
\footnotesize
\begin{center}
\begin{tabular}{llllll}\hline
\noalign{\smallskip}
Method& Index &$M_1$ & $M_2$ & $M_3$ & $S$\\
\noalign{\smallskip}\hline
\noalign{\smallskip}
MC & $S_{F,i}$ &$9.39 \times 10^{-5}$ & $1.69 \times 10^{-4}$ & $2.81 \times 10^{-4}$ & $0.558$ \\
& $S^T_{F,i}$ &$0.293$ & $0.150$ & $0.293$ & $1.00$ \\
FORM & $S_{F,i}$ &$4.75 \times 10^{-4}$ & $9.04 \times 10^{-4}$ & $7.59 \times 10^{-4}$ & $0.557$  \\
& $S^T_{F,i}$ &$0.288$ & $0.160$ & $0.281$ & $1.00$   \\
\noalign{\smallskip}\hline
\end{tabular}
\end{center}
\label{t:resultsex1}
\end{table}

\subsection{Parabolic limit state with two design points}

This example, taken from \cite{der1998multiple}, considers a component reliability problem with two design points.
The failure event is described by the following LSF
\begin{equation} \nonumber
G(\boldsymbol{U}) = b-U_2-\kappa (U_1-e)^2 \, ,
\end{equation}
where $b$, $\kappa$ and $e$ are parameters chosen as $b=5$, $\kappa=0.5$ and $e=0.1$.
The two design points are $\boldsymbol{u}^*_1 = [-2.741; 0.965]$ and $\boldsymbol{u}^*_2 = [2.916 , 1.036]$ \cite{der1998multiple}.
The limit state surface together with the two design points and the corresponding linearized limit states are shown in Fig.~\ref{f:parabolicLSF}.
The reference solution of the probability of failure is evaluated with Monte Carlo and $n_s = 10^8$ samples as $p_{F,MC}= 3.02 \times 10^{-3}$.
The FORM estimate of the corresponding series system probability is $p_{F_1} = 2.82 \times 10^{-3}$.
Table~\ref{t:resultsex2} demonstrates excellent agreement between the pick-freeze Monte Carlo estimates of the first-order and total-effect indices and the corresponding FORM approximations.

\begin{figure}[h]
  \centering
    \includegraphics[width=0.6\textwidth]{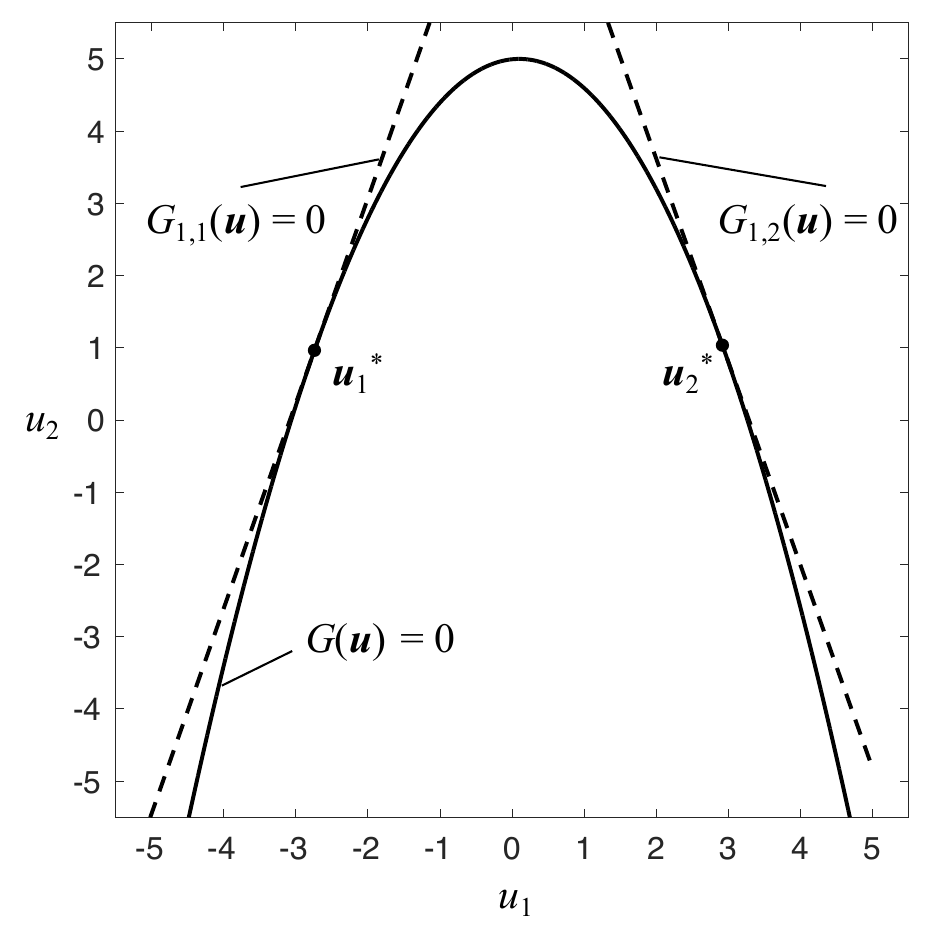}
    \caption{The limit state surface of the parabolic problem with its two design points and the corresponding linearized limit states.}
\label{f:parabolicLSF}
\end{figure}

\begin{table}
\caption{Estimates of the first-order and total-effect indices for the parabolic limit state.}
\footnotesize
\begin{center}
\begin{tabular}{llll}\hline
\noalign{\smallskip}
Method& Index &$U_1$ & $U_2$ \\
\noalign{\smallskip}\hline
\noalign{\smallskip}
MC & $S_{F,i}$ &$0.437$ & $7.77 \times 10^{-3}$  \\
& $S^T_{F,i}$ &$0.992$ & $0.564$ \\
FORM & $S_{F,i}$ &$0.430$ & $6.39 \times 10^{-3}$  \\
& $S^T_{F,i}$ &$0.994$ & $0.571$   \\
\noalign{\smallskip}\hline
\end{tabular}
\end{center}
\label{t:resultsex2}
\end{table}

\subsection{Cantilever beam-bar system}

This example, taken from \cite{song2003bounds}, considers a cantilever beam-bar system as illustrated in Fig.~\ref{f:beam-bar}.
The cantilever beam of length $2 L$ has plastic moment $M$ and is propped by an ideally rigid brittle bar of strength $T$.
A load $P$ is applied at the midpoint of the beam.
The system failure event can be described by the following LSFs:
\begin{eqnarray}  \nonumber
g_{1}(M,T,P) &=& T-5P/16 \, , \\ \nonumber
g_{2}(M,T,P) &=& M-LP \, , \\ \nonumber
g_{3}(M,T,P) &=& M-3L P/8 \, , \\ \nonumber
g_{4}(M,T,P) &=& M-L P/3 \, , \\ \nonumber
g_{5}(M,T,P) &=& M+2L T -L P \, . 
\end{eqnarray}
The minimum cut-sets of the system are $\boldsymbol{c}_1=\{1,2 \}$, $\boldsymbol{c}_2=\{3,4 \}$ and $\boldsymbol{c}_3=\{3,5 \}$; the system failure event is given by Eq.~\eqref{eq:Fgensys} with $i$-the component failure event represented by LSF $g_i$.
The length is deterministic with $L=5$ and the parameters $[M, T, P]$ are described by random variables with moments as given in Table~\ref{t:modelex4}.

We consider two cases. 
First, we model the uncertain parameters with Gaussian random variables.
The system probabilty of failure is $p_{F,\mathrm{sys}}= 7.76 \times 10^{-3}$.
As the LSFs are linear, we expect the reliability sensitivities as evaluated with Eq.~\eqref{eq:Sobol1ser} to be exact.
This is verified in Table~\ref{t:resultsex41}, where the FORM results are compared with the pick-freeze Monte Carlo estimates obtained with $n_s = 10^8$ samples, where the differences between the estimates can be attributed to sampling fluctuations.
Next, we consider the case where the strength parameters $M$ and $T$ follow the lognormal distribution and $P$ follows the normal distribution.
In this case, the probability of failure with Monte Carlo is estimated as $p_{F,MC}= 2.22 \times 10^{-4}$ and the FORM approximation is $2.61 \times 10^{-4}$.
FORM is not exact in this case as the distribution of the random  variables is non-Gaussian.
We note that the FORM approximation is evaluated through linearizing each LSF at the corresponding design point, obtained through solution of Eq.~\eqref{eq:FORMopt}.
This is expected to result in a suboptimal approximation; optimally, the joint design point of each parallel system entering the inclusion-exclusion rule of Eq.~\eqref{eq:PFsysFORM} should be identified.
The sensitivity estimates obtained by the two methods are given in Table~\ref{t:resultsex42}.
The estimates show very good agreement, although FORM slightly overestimates the first-order contributions of all three variables and underestimates the total-effect contribution of the plastic moment $M$.
It is also interesting to point out that changing the distribution of the strength parameters has a strong effect on the sensitivities; the first-order indices of $T$ and $P$ increase considerably and $P$ replaces $M$ as the variable with the most dominant total-effect contribution.

\begin{figure}[h]
  \centering
    \includegraphics[width=0.45\textwidth]{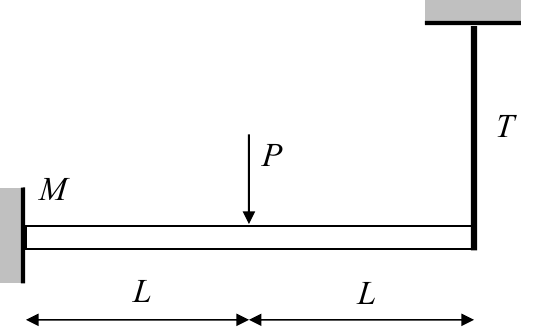}
    \caption{Cantilever beam-bar system according to \cite{song2003bounds}.}
\label{f:beam-bar}
\end{figure}

\begin{table}
\caption{Uncertain parameters of the cantilever beam-bar example.}
\footnotesize
\begin{center}
\begin{tabular}{lll}\hline
\noalign{\smallskip}
Parameter&Mean&St. Dev.\\
\noalign{\smallskip}\hline
\noalign{\smallskip}
$M$ &$1000$&$300$\\
$T$ &$110$&$20$\\
$P$ &$150$&$30$\\
\noalign{\smallskip}\hline
\end{tabular}
\end{center}
\label{t:modelex4}
\end{table}

\begin{table}
\caption{Estimates of the first-order and total-effect indices for the general system example with Gaussian inputs.}
\footnotesize
\begin{center}
\begin{tabular}{lllll}\hline
\noalign{\smallskip}
Method& Index &$M$ & $T$ & $P$\\
\noalign{\smallskip}\hline
\noalign{\smallskip}
MC & $S_{F,i}$ &$0.654 $ & $0.0103$ & $3.85 \times 10^{-3}$\\
& $S^T_{F,i}$ &$0.939$ & $0.117$ & $0.323$ \\
FORM & $S_{F,i}$ &$0.654$ & $0.0102$ & $4.05 \times 10^{-3}$ \\
& $S^T_{F,i}$ &$0.939$ & $0.117$ & $0.323$ \\
\noalign{\smallskip}\hline
\end{tabular}
\end{center}
\label{t:resultsex41}
\end{table}

\section{Concluding remarks} \label{s:conclusion}

This paper derives expressions for the first-order and total-effect reliability sensitivity indices of the FORM approximation for general system reliability problems.
These expressions can be used to estimate reliability sensitivities of general system problems with independent inputs.
They can also be used to approximate the sensitivities of nonlinear component reliability problems with multiple design points.
Numerical examples demonstrate that the proposed expressions result in highly accurate approximations of the reliability sensitivities.
Although the proposed expressions assume independent inputs, they can be employed for reliability sensitivity analysis of dependent inputs through application within the approach discussed in \cite{ehre2024variance} (cf. Remark \ref{rem:dependent}).

\begin{table}
\caption{Estimates of the first-order and total-effect indices for the general system example with non-Gaussian inputs.}
\footnotesize
\begin{center}
\begin{tabular}{lllll}\hline
\noalign{\smallskip}
Method& Index &$M$ & $T$ & $P$\\
\noalign{\smallskip}\hline
\noalign{\smallskip}
MC & $S_{F,i}$ &$0.0217 $ & $0.0238$ & $0.0114$\\
& $S^T_{F,i}$ &$0.426$ & $0.841$ & $0.947$ \\
FORM & $S_{F,i}$ &$0.0232$ & $0.0277$ & $0.0161 $ \\
& $S^T_{F,i}$ &$0.364$ & $0.841$ & $0.933$ \\
\noalign{\smallskip}\hline
\end{tabular}
\end{center}
\label{t:resultsex42}
\end{table}

The proposed expressions require the evaluation of the $2 m$-dimensional normal integral, with $m$ denoting the number of components of the system.
We have observed that available algorithms for estimating this integral tend to require a large number of samples when $m$ is large and the correlation between the components is high.
In such cases, it would be beneficial to derive approximations of the FORM sensitivities that can be computed efficiently, e.g., through application of system reliability bounds.
Another possible future research direction is to extend the work of \cite{papaioannou2022reliability} to obtain FORM approximations of the decision-theoretic sensitivities discussed in \cite{straub2022decision} for system problems.
The latter sensitivities take into account the effects of the input variables on the optimality of a decision taken based on the model output; they have the advantage of better interpretability.

\bibliographystyle{model1-num-names}
\bibliography{mybib}

\begin{thebibliography}{46}
\expandafter\ifx\csname natexlab\endcsname\relax\def\natexlab#1{#1}\fi
\providecommand{\bibinfo}[2]{#2}
\ifx\xfnm\relax \def\xfnm[#1]{\unskip,\space#1}\fi
\bibitem[{Der~Kiureghian(2022)}]{der2022structural}
\bibinfo{author}{A.~Der~Kiureghian}, \bibinfo{title}{Structural and System
  Reliability}, \bibinfo{publisher}{Cambridge University Press},
  \bibinfo{year}{2022}.
\bibitem[{Au and Beck(2001)}]{Au01}
\bibinfo{author}{S.~K. Au}, \bibinfo{author}{J.~L. Beck},
\newblock \bibinfo{title}{Estimation of small failure probabilities in high
  dimensions by subset simulation},
\newblock \bibinfo{journal}{Probabilist. Eng. Mech.} \bibinfo{volume}{15}
  (\bibinfo{year}{2001}) \bibinfo{pages}{263--277}.
\bibitem[{Papaioannou et~al.(2016)Papaioannou, Papadimitriou, and
  Straub}]{Papaioannou16}
\bibinfo{author}{I.~Papaioannou}, \bibinfo{author}{C.~Papadimitriou},
  \bibinfo{author}{D.~Straub},
\newblock \bibinfo{title}{Sequential importance sampling for structural
  reliability analysis},
\newblock \bibinfo{journal}{Structural Safety} \bibinfo{volume}{62}
  (\bibinfo{year}{2016}) \bibinfo{pages}{66--75}.
\bibitem[{Cheng et~al.(2023)Cheng, Papaioannou, Lu, Zhang, and
  Wang}]{cheng2023rare}
\bibinfo{author}{K.~Cheng}, \bibinfo{author}{I.~Papaioannou},
  \bibinfo{author}{Z.~Lu}, \bibinfo{author}{X.~Zhang},
  \bibinfo{author}{Y.~Wang},
\newblock \bibinfo{title}{Rare event estimation with sequential directional
  importance sampling},
\newblock \bibinfo{journal}{Structural Safety} \bibinfo{volume}{100}
  (\bibinfo{year}{2023}) \bibinfo{pages}{102291}.
\bibitem[{Kurtz and Song(2013)}]{Kurtz13}
\bibinfo{author}{N.~Kurtz}, \bibinfo{author}{J.~Song},
\newblock \bibinfo{title}{Cross-entropy-based adaptive importance sampling
  using {G}aussian mixture},
\newblock \bibinfo{journal}{Struct. Saf.} \bibinfo{volume}{42}
  (\bibinfo{year}{2013}) \bibinfo{pages}{35--44}.
\bibitem[{Papaioannou et~al.(2019)Papaioannou, Geyer, and
  Straub}]{papaioannou2019improved}
\bibinfo{author}{I.~Papaioannou}, \bibinfo{author}{S.~Geyer},
  \bibinfo{author}{D.~Straub},
\newblock \bibinfo{title}{Improved cross entropy-based importance sampling with
  a flexible mixture model},
\newblock \bibinfo{journal}{Reliability Engineering \& System Safety}
  \bibinfo{volume}{191} (\bibinfo{year}{2019}) \bibinfo{pages}{106564}.
\bibitem[{Hohenbichler and Rackwitz(1982)}]{hohenbichler1982first}
\bibinfo{author}{M.~Hohenbichler}, \bibinfo{author}{R.~Rackwitz},
\newblock \bibinfo{title}{First-order concepts in system reliability},
\newblock \bibinfo{journal}{Structural Safety} \bibinfo{volume}{1}
  (\bibinfo{year}{1982}) \bibinfo{pages}{177--188}.
\bibitem[{Hohenbichler et~al.(1987)Hohenbichler, Gollwitzer, Kruse, and
  Rackwitz}]{hohenbichler1987new}
\bibinfo{author}{M.~Hohenbichler}, \bibinfo{author}{S.~Gollwitzer},
  \bibinfo{author}{W.~Kruse}, \bibinfo{author}{R.~Rackwitz},
\newblock \bibinfo{title}{New light on first-and second-order reliability
  methods},
\newblock \bibinfo{journal}{Structural Safety} \bibinfo{volume}{4}
  (\bibinfo{year}{1987}) \bibinfo{pages}{267--284}.
\bibitem[{Der~Kiureghian(2005)}]{ADK05}
\bibinfo{author}{A.~Der~Kiureghian},
\newblock \bibinfo{title}{First-and second-order reliability methods},
\newblock in: \bibinfo{editor}{E.~Nikolaidis}, \bibinfo{editor}{D.~M. Ghiocel},
  \bibinfo{editor}{S.~Singhal} (Eds.), \bibinfo{booktitle}{Engineering Design
  Reliability Handbook}, \bibinfo{publisher}{CRC Press}, \bibinfo{address}{Boca
  Raton, FL}, \bibinfo{year}{2005}.
\bibitem[{Wu(1994)}]{Wu94}
\bibinfo{author}{Y.-T. Wu},
\newblock \bibinfo{title}{Comutational methods for efficient structural
  reliability and reliability sensitivity analysis},
\newblock \bibinfo{journal}{AIAA J.} \bibinfo{volume}{32}
  (\bibinfo{year}{1994}) \bibinfo{pages}{1717--1723}.
\bibitem[{Song et~al.(2009)Song, Lu, and Qiao}]{Song09}
\bibinfo{author}{S.~Song}, \bibinfo{author}{Z.~Lu}, \bibinfo{author}{H.~Qiao},
\newblock \bibinfo{title}{Subset simulation for structural reliability
  sensitivity analysis},
\newblock \bibinfo{journal}{Reliab. Eng. Syst. Safe.} \bibinfo{volume}{94}
  (\bibinfo{year}{2009}) \bibinfo{pages}{658--665}.
\bibitem[{Papaioannou et~al.(2013)Papaioannou, Breitung, and
  Straub}]{Papaioannou13}
\bibinfo{author}{I.~Papaioannou}, \bibinfo{author}{K.~Breitung},
  \bibinfo{author}{D.~Straub},
\newblock \bibinfo{title}{Reliability sensitivity analysis with {M}onte {C}arlo
  methods},
\newblock in: \bibinfo{editor}{G.~Deodatis}, \bibinfo{editor}{B.~R.
  Ellingwood}, \bibinfo{editor}{D.~M. Frangopol} (Eds.),
  \bibinfo{booktitle}{Proc.\ 11th International Conference on Structural Safety
  and Reliability (ICOSSAR)}, \bibinfo{address}{Columbia University, New York,
  2013}.
\bibitem[{Dubourg and Sudret(2014)}]{dubourg2014meta}
\bibinfo{author}{V.~Dubourg}, \bibinfo{author}{B.~Sudret},
\newblock \bibinfo{title}{Meta-model-based importance sampling for reliability
  sensitivity analysis},
\newblock \bibinfo{journal}{Structural Safety} \bibinfo{volume}{49}
  (\bibinfo{year}{2014}) \bibinfo{pages}{27--36}.
\bibitem[{Jensen et~al.(2015)Jensen, Mayorga, and
  Valdebenito}]{jensen2015reliability}
\bibinfo{author}{H.~Jensen}, \bibinfo{author}{F.~Mayorga},
  \bibinfo{author}{M.~Valdebenito},
\newblock \bibinfo{title}{Reliability sensitivity estimation of nonlinear
  structural systems under stochastic excitation: A simulation-based approach},
\newblock \bibinfo{journal}{Computer Methods in Applied Mechanics and
  Engineering} \bibinfo{volume}{289} (\bibinfo{year}{2015})
  \bibinfo{pages}{1--23}.
\bibitem[{Papaioannou et~al.(2018)Papaioannou, Breitung, and
  Straub}]{papaioannou2018reliability}
\bibinfo{author}{I.~Papaioannou}, \bibinfo{author}{K.~Breitung},
  \bibinfo{author}{D.~Straub},
\newblock \bibinfo{title}{Reliability sensitivity estimation with sequential
  importance sampling},
\newblock \bibinfo{journal}{Structural Safety} \bibinfo{volume}{75}
  (\bibinfo{year}{2018}) \bibinfo{pages}{24--34}.
\bibitem[{Au(2004)}]{au2004probabilistic}
\bibinfo{author}{S.~Au},
\newblock \bibinfo{title}{Probabilistic failure analysis by importance sampling
  {M}arkov chain simulation},
\newblock \bibinfo{journal}{Journal of Engineering Mechanics}
  \bibinfo{volume}{130} (\bibinfo{year}{2004}) \bibinfo{pages}{303--311}.
\bibitem[{Cui et~al.(2010)Cui, L{\"u}, and Zhao}]{cui2010moment}
\bibinfo{author}{L.~Cui}, \bibinfo{author}{Z.~L{\"u}},
  \bibinfo{author}{X.~Zhao},
\newblock \bibinfo{title}{Moment-independent importance measure of basic random
  variable and its probability density evolution solution},
\newblock \bibinfo{journal}{Science China Technological Sciences}
  \bibinfo{volume}{53} (\bibinfo{year}{2010}) \bibinfo{pages}{1138--1145}.
\bibitem[{Li et~al.(2012)Li, Zhenzhou, Jun, and Bintuan}]{luyi2012moment}
\bibinfo{author}{L.~Li}, \bibinfo{author}{L.~Zhenzhou},
  \bibinfo{author}{F.~Jun}, \bibinfo{author}{W.~Bintuan},
\newblock \bibinfo{title}{Moment-independent importance measure of basic
  variable and its state dependent parameter solution},
\newblock \bibinfo{journal}{Structural Safety} \bibinfo{volume}{38}
  (\bibinfo{year}{2012}) \bibinfo{pages}{40--47}.
\bibitem[{Wei et~al.(2012)Wei, Lu, Hao, Feng, and Wang}]{wei2012efficient}
\bibinfo{author}{P.~Wei}, \bibinfo{author}{Z.~Lu}, \bibinfo{author}{W.~Hao},
  \bibinfo{author}{J.~Feng}, \bibinfo{author}{B.~Wang},
\newblock \bibinfo{title}{Efficient sampling methods for global reliability
  sensitivity analysis},
\newblock \bibinfo{journal}{Computer Physics Communications}
  \bibinfo{volume}{183} (\bibinfo{year}{2012}) \bibinfo{pages}{1728--1743}.
\bibitem[{Wang et~al.(2013)Wang, Lu, and Tang}]{wang2013application}
\bibinfo{author}{P.~Wang}, \bibinfo{author}{Z.~Lu}, \bibinfo{author}{Z.~Tang},
\newblock \bibinfo{title}{An application of the kriging method in global
  sensitivity analysis with parameter uncertainty},
\newblock \bibinfo{journal}{Applied Mathematical Modelling}
  \bibinfo{volume}{37} (\bibinfo{year}{2013}) \bibinfo{pages}{6543--6555}.
\bibitem[{Borgonovo and Iooss(2016)}]{borgonovo2016moment}
\bibinfo{author}{E.~Borgonovo}, \bibinfo{author}{B.~Iooss},
\newblock \bibinfo{title}{Moment-independent and reliability-based importance
  measures},
\newblock \bibinfo{journal}{Handbook of Uncertainty Quantification}
  (\bibinfo{year}{2016}) \bibinfo{pages}{1--23}.
\bibitem[{Ehre et~al.(2020)Ehre, Papaioannou, and Straub}]{ehre2020framework}
\bibinfo{author}{M.~Ehre}, \bibinfo{author}{I.~Papaioannou},
  \bibinfo{author}{D.~Straub},
\newblock \bibinfo{title}{A framework for global reliability sensitivity
  analysis in the presence of multi-uncertainty},
\newblock \bibinfo{journal}{Reliability Engineering \& System Safety}
  \bibinfo{volume}{195} (\bibinfo{year}{2020}) \bibinfo{pages}{106726}.
\bibitem[{Il~Idrissi et~al.(2021)Il~Idrissi, Chabridon, and
  Iooss}]{idrissi2021developments}
\bibinfo{author}{M.~Il~Idrissi}, \bibinfo{author}{V.~Chabridon},
  \bibinfo{author}{B.~Iooss},
\newblock \bibinfo{title}{Developments and applications of shapley effects to
  reliability-oriented sensitivity analysis with correlated inputs},
\newblock \bibinfo{journal}{Environmental Modelling \& Software}
  \bibinfo{volume}{143} (\bibinfo{year}{2021}) \bibinfo{pages}{105115}.
\bibitem[{Straub et~al.(2022)Straub, Ehre, and
  Papaioannou}]{straub2022decision}
\bibinfo{author}{D.~Straub}, \bibinfo{author}{M.~Ehre},
  \bibinfo{author}{I.~Papaioannou},
\newblock \bibinfo{title}{Decision-theoretic reliability sensitivity},
\newblock \bibinfo{journal}{Reliability Engineering \& System Safety}
  \bibinfo{volume}{221} (\bibinfo{year}{2022}) \bibinfo{pages}{108215}.
\bibitem[{Sobol(1993)}]{sobol1993sensitivity}
\bibinfo{author}{I.~M. Sobol},
\newblock \bibinfo{title}{Sensitivity estimates for nonlinear mathematical
  models},
\newblock \bibinfo{journal}{Mathematical Modelling and Computational
  Experiments} \bibinfo{volume}{1} (\bibinfo{year}{1993})
  \bibinfo{pages}{407--414}.
\bibitem[{Perrin and Defaux(2019)}]{perrin2019efficient}
\bibinfo{author}{G.~Perrin}, \bibinfo{author}{G.~Defaux},
\newblock \bibinfo{title}{Efficient evaluation of reliability-oriented
  sensitivity indices},
\newblock \bibinfo{journal}{Journal of Scientific Computing}
  \bibinfo{volume}{79} (\bibinfo{year}{2019}) \bibinfo{pages}{1433--1455}.
\bibitem[{Li et~al.(2019)Li, Papaioannou, and Straub}]{li2019global}
\bibinfo{author}{L.~Li}, \bibinfo{author}{I.~Papaioannou},
  \bibinfo{author}{D.~Straub},
\newblock \bibinfo{title}{Global reliability sensitivity estimation based on
  failure samples},
\newblock \bibinfo{journal}{Structural Safety} \bibinfo{volume}{81}
  (\bibinfo{year}{2019}) \bibinfo{pages}{101871}.
\bibitem[{Ehre et~al.(2024)Ehre, Papaioannou, and Straub}]{ehre2024variance}
\bibinfo{author}{M.~Ehre}, \bibinfo{author}{I.~Papaioannou},
  \bibinfo{author}{D.~Straub},
\newblock \bibinfo{title}{Variance-based reliability sensitivity with dependent
  inputs using failure samples},
\newblock \bibinfo{journal}{Structural Safety} \bibinfo{volume}{106}
  (\bibinfo{year}{2024}) \bibinfo{pages}{102396}.
\bibitem[{Hohenbichler and Rackwitz(1986)}]{hohenbichler1986sensitivity}
\bibinfo{author}{M.~Hohenbichler}, \bibinfo{author}{R.~Rackwitz},
\newblock \bibinfo{title}{Sensitivity and importance measures in structural
  reliability},
\newblock \bibinfo{journal}{Civil Engineering Systems} \bibinfo{volume}{3}
  (\bibinfo{year}{1986}) \bibinfo{pages}{203--209}.
\bibitem[{Kim and Song(2018)}]{kim2018generalized}
\bibinfo{author}{T.~Kim}, \bibinfo{author}{J.~Song},
\newblock \bibinfo{title}{Generalized reliability importance measure ({GRIM})
  using {G}aussian mixture},
\newblock \bibinfo{journal}{Reliability Engineering \& System Safety}
  \bibinfo{volume}{173} (\bibinfo{year}{2018}) \bibinfo{pages}{105--115}.
\bibitem[{Papaioannou and Straub(2021)}]{papaioannou2021variance}
\bibinfo{author}{I.~Papaioannou}, \bibinfo{author}{D.~Straub},
\newblock \bibinfo{title}{Variance-based reliability sensitivity analysis and
  the form $\alpha$-factors},
\newblock \bibinfo{journal}{Reliability Engineering \& System Safety}
  \bibinfo{volume}{210} (\bibinfo{year}{2021}) \bibinfo{pages}{107496}.
\bibitem[{Efron and Stein(1981)}]{efron1981jackknife}
\bibinfo{author}{B.~Efron}, \bibinfo{author}{C.~Stein},
\newblock \bibinfo{title}{The jackknife estimate of variance},
\newblock \bibinfo{journal}{The Annals of Statistics}  (\bibinfo{year}{1981})
  \bibinfo{pages}{586--596}.
\bibitem[{Homma and Saltelli(1996)}]{homma1996importance}
\bibinfo{author}{T.~Homma}, \bibinfo{author}{A.~Saltelli},
\newblock \bibinfo{title}{Importance measures in global sensitivity analysis of
  nonlinear models},
\newblock \bibinfo{journal}{Reliability Engineering \& System Safety}
  \bibinfo{volume}{52} (\bibinfo{year}{1996}) \bibinfo{pages}{1--17}.
\bibitem[{Ehre et~al.(2020)Ehre, Papaioannou, and Straub}]{ehre2020global}
\bibinfo{author}{M.~Ehre}, \bibinfo{author}{I.~Papaioannou},
  \bibinfo{author}{D.~Straub},
\newblock \bibinfo{title}{Global sensitivity analysis in high dimensions with
  {PLS-PCE}},
\newblock \bibinfo{journal}{Reliability Engineering \& System Safety}
  \bibinfo{volume}{198} (\bibinfo{year}{2020}) \bibinfo{pages}{106861}.
\bibitem[{Hohenbichler and Rackwitz(1981)}]{Hohe81}
\bibinfo{author}{M.~Hohenbichler}, \bibinfo{author}{R.~Rackwitz},
\newblock \bibinfo{title}{Non-normal dependent vectors in structural safety},
\newblock \bibinfo{journal}{J. Eng. Mech. Div.-ASCE} \bibinfo{volume}{107}
  (\bibinfo{year}{1981}) \bibinfo{pages}{1227--1238}.
\bibitem[{Der~Kiureghian and Liu(1986)}]{ADK86}
\bibinfo{author}{A.~Der~Kiureghian}, \bibinfo{author}{P.~L. Liu},
\newblock \bibinfo{title}{Structural reliability under incomplete probability
  information},
\newblock \bibinfo{journal}{J. Eng. Mech.-ASCE} \bibinfo{volume}{112}
  (\bibinfo{year}{1986}) \bibinfo{pages}{85--104}.
\bibitem[{Ditlevsen(1979)}]{ditlevsen1979narrow}
\bibinfo{author}{O.~Ditlevsen},
\newblock \bibinfo{title}{Narrow reliability bounds for structural systems},
\newblock \bibinfo{journal}{Journal of Structural Mechanics}
  \bibinfo{volume}{7} (\bibinfo{year}{1979}) \bibinfo{pages}{453--472}.
\bibitem[{Zhang(1993)}]{zhang1993high}
\bibinfo{author}{Y.~C. Zhang},
\newblock \bibinfo{title}{High-order reliability bounds for series systems and
  application to structural systems},
\newblock \bibinfo{journal}{Computers \& Structures} \bibinfo{volume}{46}
  (\bibinfo{year}{1993}) \bibinfo{pages}{381--386}.
\bibitem[{Song and Der~Kiureghian(2003)}]{song2003bounds}
\bibinfo{author}{J.~Song}, \bibinfo{author}{A.~Der~Kiureghian},
\newblock \bibinfo{title}{Bounds on system reliability by linear programming},
\newblock \bibinfo{journal}{Journal of Engineering Mechanics}
  \bibinfo{volume}{129} (\bibinfo{year}{2003}) \bibinfo{pages}{627--636}.
\bibitem[{Der~Kiureghian and Dakessian(1998)}]{der1998multiple}
\bibinfo{author}{A.~Der~Kiureghian}, \bibinfo{author}{T.~Dakessian},
\newblock \bibinfo{title}{Multiple design points in first and second-order
  reliability},
\newblock \bibinfo{journal}{Structural Safety} \bibinfo{volume}{20}
  (\bibinfo{year}{1998}) \bibinfo{pages}{37--49}.
\bibitem[{Ambartzumian et~al.(1998)Ambartzumian, Der~Kiureghian, Ohaniana, and
  Sukiasiana}]{ambartzumian1998multinormal}
\bibinfo{author}{R.~Ambartzumian}, \bibinfo{author}{A.~Der~Kiureghian},
  \bibinfo{author}{V.~Ohaniana}, \bibinfo{author}{H.~Sukiasiana},
\newblock \bibinfo{title}{Multinormal probability by sequential conditioned
  importance sampling: theory and application},
\newblock \bibinfo{journal}{Probabilistic Engineering Mechanics}
  \bibinfo{volume}{13} (\bibinfo{year}{1998}) \bibinfo{pages}{299--308}.
\bibitem[{Botev(2017)}]{botev2017normal}
\bibinfo{author}{Z.~I. Botev},
\newblock \bibinfo{title}{The normal law under linear restrictions: simulation
  and estimation via minimax tilting},
\newblock \bibinfo{journal}{Journal of the Royal Statistical Society: Series B
  (Statistical Methodology)} \bibinfo{volume}{79} (\bibinfo{year}{2017})
  \bibinfo{pages}{125--148}.
\bibitem[{Gessner et~al.(2020)Gessner, Kanjilal, and
  Hennig}]{gessner2020integrals}
\bibinfo{author}{A.~Gessner}, \bibinfo{author}{O.~Kanjilal},
  \bibinfo{author}{P.~Hennig},
\newblock \bibinfo{title}{Integrals over {G}aussians under linear domain
  constraints},
\newblock in: \bibinfo{booktitle}{International Conference on Artificial
  Intelligence and Statistics, 2020}, \bibinfo{organization}{PMLR}, pp.
  \bibinfo{pages}{2764--2774}.
\bibitem[{Zhao and Ang(2003)}]{zhao2003system}
\bibinfo{author}{Y.-G. Zhao}, \bibinfo{author}{A.~H. Ang},
\newblock \bibinfo{title}{System reliability assessment by method of moments},
\newblock \bibinfo{journal}{Journal of Structural Engineering}
  \bibinfo{volume}{129} (\bibinfo{year}{2003}) \bibinfo{pages}{1341--1349}.
\bibitem[{Prieur and Tarantola(2017)}]{prieur2017variance}
\bibinfo{author}{C.~Prieur}, \bibinfo{author}{S.~Tarantola},
\newblock \bibinfo{title}{Variance-based sensitivity analysis: Theory and
  estimation algorithms},
\newblock \bibinfo{journal}{Handbook of Uncertainty Quantification}
  (\bibinfo{year}{2017}) \bibinfo{pages}{1217--1239}.
\bibitem[{Papaioannou and Straub(2022)}]{papaioannou2022reliability}
\bibinfo{author}{I.~Papaioannou}, \bibinfo{author}{D.~Straub},
\newblock \bibinfo{title}{Reliability sensitivity analysis with {FORM}},
\newblock in: \bibinfo{booktitle}{Proc.\ 13th International Conference on
  Structural Safety and Reliability (ICOSSAR 2022)}, \bibinfo{address}{Tongji
  University, Shanghai, China, 2022}.

\end{thebibliography}







\end{document}